\documentclass[11pt]{article}

\usepackage[colorlinks=true, linkcolor=blue, citecolor=blue, urlcolor=blue, pagebackref]{hyperref}
\usepackage[dvipsnames,usenames]{color}
\usepackage{xcolor}

\hypersetup{
  linkcolor=[rgb]{0.3,0.3,0.6},
  citecolor=[rgb]{0.2, 0.6, 0.2},
  urlcolor=[rgb]{0.6, 0.2, 0.2}
}

\usepackage{fullpage}
\usepackage{amsmath}
\usepackage{amssymb}
\usepackage{amsthm}
\usepackage{thmtools,thm-restate}

\declaretheoremstyle[bodyfont=\it,qed=\qedsymbol]{noproofstyle}

\declaretheorem[numberlike=equation]{lemma}

\declaretheorem[numberlike=equation]{claim}
\declaretheorem[numberlike=equation,style=defstyle]{example}
\declaretheorem[numberlike=equation,style=defstyle]{remark}

\usepackage[nameinlink,capitalise]{cleveref}

\usepackage{nth}

\newcommand{\iv}[1]{\ensuremath{[\![{#1}]\!]}}
\newcommand{\ehref}[1]{\href{mailto:#1}{#1}}
\newcommand{\vp}{\ensuremath{\tau}}
\newcommand{\up}{\ensuremath{\sigma}}
\newcommand{\cp}{\ensuremath{\rho}}
\newcommand{\conv}{\text{conv}}
\newcommand{\cone}{\text{cone}}

\title{Optimal Union Probability Interval Is NP-Hard}

\author{ 
{Petteri Kaski\thanks{Department of Computer Science, Aalto University, Finland. \ehref{petteri.kaski@aalto.fi}.}}
\and 
{Heikki Mannila\thanks{Department of Computer Science, Aalto University, Finland. \ehref{heikki.mannila@aalto.fi}.}}
\and
{Chandra Kanta Mohapatra\thanks{Department of Computer Science, Aalto University, Finland. \ehref{chandra.mohapatra@aalto.fi}.}}
}

\date{}

\begin{document}

\maketitle

\begin{abstract}
A problem dating back to Boole~[{\em Laws of Thought}, Walton \& Maberly,~1854]
is what can be computed about the probability of a finite union of events 
when given as input the probabilities of intersections of some of the events.
The modern geometric study of the problem can be traced back to 
Hailperin [{\em Amer.~Math.~Monthly} 2 (1965) 343--359] who phrased the 
problem in the language of linear programming and generalized it to
logical formulas of the events other than disjunction, 
heralding a substantial body of work in probabilistic logic 
[Nilsson, {\em Artif.\ Intell.}\ 28~(1986) 71--87], including the 
{\em probabilistic satisfiability problem} of 
Georgakopoulos, Kavvadis, and Papadimitriou 
[{\em J.~Complexity} 4 (1988) 1--11], as well as fundamental connections to 
the geometry of metrics via cut and correlation polytopes 
[Deza and Laurent, {\em Geometry of Cuts and Metrics}, Springer, 1997]
and to the study of marginal polytopes in graphical models of machine learning 
[Wainwright and Jordan, {\em Found.\ Trends Mach.\ Learn.}\ 1 (2008) 1--305].
This paper (i)~describes the pertinent geometry of Boole's problem 
via coordinate projections of an elementary polytope arising essentially 
from Hailperin's linear program on the atoms of a Venn diagram, 
and (ii)~shows that computing the optimal interval for the union probability 
is NP-hard, resolving an apparent gap in the literature highlighted by Pitowsky [{\em Math.\ Programming} 50 (1991) 395--414] and 
Boros {\em et al.}\ [{\em Math.\ Oper.\ Res.}\ 39 (2014) 1311--1329 and 51 (2026) 134--148]. 
\end{abstract}

\section{Introduction}

\label{sect:introduction}

Suppose we have $n$ events $B_1,B_2,\ldots,B_n$ in a probability space
$\Omega$ and write $|B|$ for the probability of an event $B$.
The classical inclusion-exclusion formula 
\begin{equation}
\label{eq:inclusion-exclusion}
|\!\cup_{i\in[n]}B_i|\ =\!\sum_{\emptyset\neq S\subseteq [n]}(-1)^{|S|+1}|\!\cap_{i\in S}B_i|
\end{equation}
efficiently solves for the probability of the union event $\cup_{i\in[n]}B_i$, 
given as input the probabilities of all the intersection events 
$B_S:=\cap_{i\in S}B_i$ 
for all $\emptyset\neq S\subseteq[n]:=\{1,2,\ldots,n\}$ and 
$B_\emptyset:=\Omega$. But what can one compute about the union 
probability when one is given as input the probabilities of only some of the 
intersection events? More specifically, suppose we are given as input
a set family $\emptyset\neq\mathcal{F}\subseteq 2^{[n]}\setminus\{\emptyset\}$
together with the intersection probability $|B_S|$ for each $S\in\mathcal{F}$, 
what can we say about $|\!\cup_{i\in[n]}B_i|$?

This problem originates to Boole~\cite{Boole1854}, who 
observed that when only individual event probabilities are given, 
the union probability satisfies 
the inequalities
\begin{equation}
\label{eq:boole-frechet}
\max_{i\in[n]}|B_i|
\leq
|\!\cup_{i\in[n]}B_i|
\leq
\min\bigl(1,\sum_{i\in[n]}|B_i|\bigr)\,,
\end{equation}
where the upper bound in particular is the familiar {\em union bound},
or {\em Boole's inequality}. These inequalities are also best possible,
as was shown by Fr\'echet~\cite{frechet1940, frechet1943}. Boole's inequality generalizes
to the classical {\em Bonferroni inequalities}~\cite{Bonferroni1936}
\[
\begin{split}
&|\!\cup_{i\in[n]}B_i|\ \leq\!\sum_{\substack{S\subseteq [n]\\1\leq |S|\leq k}}(-1)^{|S|+1}|B_S|\quad\text{for $k\in[n]$ odd, and}\\
&|\!\cup_{i\in[n]}B_i|\ \geq\!\sum_{\substack{S\subseteq [n]\\1\leq |S|\leq k}}(-1)^{|S|+1}|B_S|\quad\text{for $k\in[n]$ even}\,,
\end{split}
\]
which however are not best possible; a substantial literature exists on 
improved and generalized Bonferroni inequalities as well as approximate
versions of the inclusion--exclusion formula (cf.~our discussion of related
work below). Our interest in this paper is the geometry and computational 
complexity of the {\em best-possible} inequalities for given 
$\emptyset\neq\mathcal{F}\subseteq 2^{[n]}\setminus\{\emptyset\}$
and $|B_S|$ for each $S\in\mathcal{F}$. In particular, we show that the 
best-possible inequalities are NP-hard to compute already 
when $\mathcal{F}$ consists of sets of size at most two.
We refer to the task of computing the best-possible inequalities as
{\em Boole's problem} in what follows. 

A geometric representation of Boole's problem
can be traced back to Hailperin~\cite{Hailperin1965}, who observed that
the best-possible inequalities can be derived by linear programming.%
\footnote{Hailperin in fact observed that linear programming can be used
to study the probability of any logical function of the events 
$B_1,B_2,\ldots,B_n$, not just the disjunction (union) of the events, 
and the given input can similarly consist of the probabilities of 
any logical functions of the events, not just conjunctions (intersections) 
of events. For simplicity, we restrict to study the complexity of 
the union probability with given intersection probabilities.}{}
Hailperin's key combinatorial observation is that it suffices to study the 
probabilities of the {\em Venn atoms} 
(the $2^n$ cells of a Venn diagram easily visualized for $n=2,3$) 
of the $n$ events $B_1,B_2,\ldots,B_n$. 
More precisely, associate with each subset $T\subseteq [n]$
the {\em atomic event} or {\em atom} 
\[
X_T=(\cap_{i\in T}B_i)\cap(\cap_{i\in[n]\setminus T}\,\Omega\!\setminus\!B_i)
\]
and observe that the atoms form a set partition of $\Omega$.
Moreover, for each $S\in\mathcal{F}$, the intersection event 
$B_S=\cap_{i\in S}B_i$ partitions into atoms as $B_S=\cup_{S\subseteq T}X_T$.
Geometrically, introducing real unknowns $x_T$ and $b_S$ for 
the pertinent probabilities, the partitioning immediately gives 
rise to the {\em Venn polytope} 
$\vp^{(\mathcal{F})}\subseteq\mathbb{Q}^{|\mathcal{F}|+2^n}$ defined 
by the hyperplanes and halfspaces
\begin{equation}
\label{eq:param-zeta}
    \sum_{T\subseteq[n]}x_T=1\,,\quad
  \sum_{S\subseteq T}x_T=b_S\ 
\text{for all $S\in\mathcal{F}$}\,,\quad\text{and}\quad
                       x_T \geq 0\ \text{for all $T\subseteq[n]$}\,.
\end{equation}
We stress that both the atom probabilities $x_T$ as well as
the intersection probabilities $b_S$ are coordinates of the Venn polytope. 

\begin{example}[A Venn polytope]
\label{exa:vp}
The Venn polytope $\vp^{(\mathcal{F})}$
for $\mathcal{F}=\{\{1\},\{2\},\{1,2\}\}$ in its 
halfspace-representation (left) and vertex-representation (right):
\[
\left[
\begin{array}{r@{\ }c@{\ }l}
x_\emptyset +
x_{\{1\}} +
x_{\{2\}} +
x_{\{1,2\}} & = & 1\\
x_{\{1\}} + x_{\{1,2\}} & = & b_{\{1\}}\\
x_{\{2\}} + x_{\{1,2\}} & = & b_{\{2\}}\\
            x_{\{1,2\}} & = & b_{\{1,2\}}\\
x_\emptyset\,,
x_{\{1\}}\,,
x_{\{2\}}\,,
x_{\{1,2\}} & \geq & 0\\
\end{array}
\right]
\qquad
\begin{array}{cccccc@{\ \ }c}
b_{\{1\}}&
b_{\{2\}}&
b_{\{1,2\}}&
x_\emptyset&
x_{\{1\}}&
x_{\{2\}}&
x_{\{1,2\}}\\\hline
0&0&0&1&0&0&0\\
1&0&0&0&1&0&0\\
0&1&0&0&0&1&0\\
1&1&1&0&0&0&1
\end{array}
\,.
\]
\end{example}
Now, following Hailperin, since the union event partitions as 
$\cup_{i\in[n]}B_i=\cup_{\emptyset\neq T\subseteq[n]}X_T=\Omega\setminus X_\emptyset$, the best-possible inequalities for the union probabability 
$u=|\!\cup_{i\in[n]}B_i|$ are
obtained from the linear program that minimizes/maximizes the objective 
function $u=\sum_{\emptyset\neq T\subseteq[n]}x_T=1-x_\emptyset$ over
the constraints \eqref{eq:param-zeta} and $b_S=|B_S|$ for 
each $S\in\mathcal{F}$ in the given input; we refer to this linear program
as {\em Hailperin's} linear program. We say that an abstract vector
$b=(b_S\in[0,1]:S\in\mathcal{F})$ of probabilities is {\em feasible}
if Hailperin's linear program is feasible. 
In particular, $b$ is feasible if and only if there exists a probability 
space $\Omega$ and $n$ events $B_1,B_2,\ldots,B_n$ in the space 
that {\em realize} $b$ with $b_S=|B_S|$ for all $S\in\mathcal{F}$.

While Hailperin's characterization of the optimal solution to Boole's problem
by linear programming is concise and elegant, 
the exponential size of the linear program in the number of events $n$ leaves 
open the question whether more efficient optimal characterizations are 
available. We show that computing the minimum/maximum union probability 
is NP-hard, thus making it unlikely that efficient optimal characterizations
are available in general. Our main results is as follows.

\begin{restatable}[Main; NP-hardness of Boole's union probability problem]{theorem}{minmaxunion}
\label{thm:minmax}
Given as input a set family
$\emptyset\neq\mathcal{F}\subseteq 2^{[n]}\setminus\{\emptyset\}$ 
and a feasible
rational vector $b=(b_S\in[0,1]:S\in\mathcal{F})$, 
each of the following two problems is NP-hard to solve:
\begin{enumerate}
\item
determine the minimum union probability for a realization of $b$, and
\item
determine the maximum union probability for a realization of $b$.
\end{enumerate}
Moreover, NP-hardness holds even when $\mathcal{F}$ consists of sets of size 
at most two.
\end{restatable}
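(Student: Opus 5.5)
For the minimization problem I would reduce from the \emph{fractional chromatic number}, which is NP-hard to compute (Lund and Yannakakis; Gr\"otschel, Lov\'asz and Schrijver). Given a graph $G=([n],E)$, I take $\mathcal{F}=\{\{i\}:i\in[n]\}\cup E$ and set $b_{\{i\}}=q:=1/n$ for every $i$ and $b_{\{i,j\}}=0$ for every edge $ij\in E$. This $b$ is feasible: realize it by $n$ pairwise disjoint events, each of probability $1/n$. In Hailperin's linear program, $x_T=0$ whenever $T$ contains an edge, since $x_T\le b_{\{i,j\}}=0$. So the atoms that may carry mass are exactly the independent sets $T$ of $G$. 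The constraints become $\sum_{T\ni i}x_T=q$ for each $i$, with $x_T\ge 0$, and the slack $x_\emptyset=1-\sum_{T\neq\emptyset}x_T$ is automatically nonnegative.

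Minimizing $u=\sum_{T\neq\emptyset}x_T$ is then exactly $q$ times the fractional-colouring LP with covering constraints held at equality. Equality is harmless here: any fractional colouring can be shrunk to exact covers, because independent sets are closed under taking subsets. Hence the minimum union probability equals $\chi_f(G)/n\le 1$, and computing it computes $\chi_f(G)$. All sets in $\mathcal{F}$ have size at most two. The only routine check is that $\chi_f(G)\le n$, so the point stays inside the probability simplex.

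For the maximization problem I would pass to the correlation polytope. Write $\pi_T\in\{0,1\}^{\mathcal{F}}$ for the vector with $(\pi_T)_S=\iv{S\subseteq T}$. The program says $b=\sum_T x_T\pi_T$ with $x$ a probability vector, and $\pi_\emptyset=0$. So the maximum union probability is
\[
\max\Bigl\{\textstyle\sum_{T\neq\emptyset}\lambda_T : b=\sum_{T\neq\emptyset}\lambda_T\pi_T,\ \lambda\ge 0,\ \sum_T\lambda_T\le 1\Bigr\},
\]
and it equals $1$ iff $b\in\conv\{\pi_T:T\neq\emptyset\}$. I would reduce from membership in the correlation polytope $\conv\{\pi_T:T\subseteq[n]\}$ with $\mathcal{F}$ the singletons and pairs, which is NP-complete by Pitowsky (equivalently, via the covariance map, from the cut polytope and \textsc{Max-Cut}).

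The gadget adds a fresh event $B_0$ with $b_{\{0\}}=1/2$ and $b_{\{0,i\}}=c_i/2$, $b_{\{0,i,j\}}$ replaced by pair data on a doubled index set, and so on. The aim is that, conditioned on $B_0$, the events realize a candidate point $c$, while conditioned on $\Omega\setminus B_0$ they realize a point known to lie in the polytope. Then the maximum union is $1$ exactly when $c$ is realizable with the empty atom avoided. To keep all sets of size at most two, I would instead fix the candidate $c$ and compare the maximum union at $b=\varepsilon c$ with that at a slightly perturbed point. Scaling by $\varepsilon$ keeps $b$ feasible, since $\varepsilon c=\varepsilon c+(1-\varepsilon)\pi_\emptyset$ whenever $c$ is in the polytope. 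A rational threshold on the maximum then decides membership of $c$.

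Guaranteeing feasibility of the produced $b$ for \emph{every} input instance is delicate, because the reduction must output a feasible vector even on ``no'' instances. I would handle this by mixing $c$ with a point interior to the correlation polytope, such as the independent-events point $b_{\{i\}}=1/2$, $b_{\{i,j\}}=1/4$. This replaces the membership question with an optimization over the polytope, for instance a maximal-scaling gauge, which remains NP-hard by the equivalence of optimization and separation. Making this last step precise is the main obstacle: I need a clean linear identity expressing the maximum union probability of an explicitly feasible pair-only instance as an NP-hard linear optimization over the correlation polytope (for example \textsc{Max-Cut} weights encoded in the pair probabilities). The minimization part is comparatively straightforward.
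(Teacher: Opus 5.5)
Your treatment of the minimum is essentially the paper's proof: the same family of singletons and edges, $b_{\{u\}}=1/n$, $b_e=0$ on edges, the same subset-closure argument to shrink a fractional colouring to an exact cover, and the same conclusion that the minimum union probability is $\chi_f(G)/n$.

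\textbf{The maximization half has a genuine gap: it contains no actual reduction, and each concrete step you sketch fails.}
\begin{itemize}
\item The $B_0$ gadget needs the values $|B_0\cap B_i\cap B_j|$ to fix the conditional pair data. These are sets of size three.
\item The rescaled query $b=\varepsilon c$ is feasible for some $\varepsilon>0$ only when $c\in\cone\{\pi_T:T\neq\emptyset\}$. So on no-instances you may hand the oracle an infeasible vector.
\item Even when $\varepsilon c$ is feasible, for small $\varepsilon$ the constraint $\sum_T\lambda_T\le 1$ is slack. The oracle then returns $\varepsilon\max\{\sum_T\mu_T:\sum_T\mu_T\pi_T=c,\ \mu\ge 0\}$. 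But $c$ lies in the correlation polytope iff the corresponding \emph{minimum} is at most $1$. This query measures the wrong quantity.
\item The test ``maximum $=1$'' concerns $\conv\{\pi_T:T\neq\emptyset\}$, not the correlation polytope. You give no way to make it hard on inputs already known to be feasible.
\item The interior-point mixing comes with no identity relating the oracle's output to any NP-hard quantity.
\end{itemize}
As you acknowledge, the crucial step is missing.

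\textbf{The missing idea} is an explicit family of queries that is feasible for \emph{every} value of its parameters, and on which Hailperin's LP dualizes to linear optimization over a polyhedron with NP-hard membership. The paper's construction runs as follows.
\begin{itemize}
\item Take $\mathcal{F}$ to be all singletons and all pairs, with $b_{\{u\}}=1/n$, $b_{\{u,v\}}=0$ on non-edges, and $b_e=c_e\in[0,1/n^2]$ on edges of $G$. An explicit realization shows feasibility for every such $c$.
\item Since the singleton values sum to $1$, $\sum_{T\neq\emptyset}x_T\le 1$ holds automatically. So the normalization can be dropped, and the dual becomes $\min\{b^\top y:\sum_{\mathcal{F}\ni S\subseteq T}y_S\ge 1 \text{ for all } T\neq\emptyset\}$.
\item Zero-cost non-edge variables discharge every non-clique constraint. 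The bound $c_e\le 1/n^2$ forces every singleton dual to equal $1$.
\item What remains is that the maximum union probability equals $1-\max\{c^\top y:y\in\phi_G\}$. Here $\phi_G$ is the set of $y:E(G)\rightarrow\mathbb{Q}$ with $\sum_{E(G)\ni e\subseteq T}y_e\le |T|-1$ for every clique $\emptyset\neq T\subseteq V(G)$.
\item After rescaling an arbitrary $w\ge 0$ into $[0,1/n^2]$, the oracle optimizes over $\phi_G$ in every nonnegative direction. All other directions are unbounded, because the recession cone of $\phi_G$ is the nonpositive orthant.
\item This validity oracle for $\phi_G$ is a membership oracle for the polar $\phi_G^*$. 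That polar lies between explicit balls and has polynomial facet complexity.
\item The Gr\"otschel--Lov\'asz--Schrijver machinery therefore upgrades it to a strong validity oracle for $\phi_G^*$, which is a membership oracle for $\phi_G^{**}=\phi_G$.
\item Finally, the constant vector with entries $2/(k-1)$ lies outside $\phi_G$ iff $G$ has a $k$-clique.
\end{itemize}
Your instinct that the equivalence of optimization and separation is the right tool is sound. Note, though, that the result is a polynomial-time Turing reduction with many oracle calls, not the single threshold test you were aiming for. The always-feasible family, together with this duality computation, is exactly what your proposal lacks.
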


\begin{remark}[Polynomial-time solvable special cases]
In contrast to NP-hardness, we observe that the union probability problem 
is efficiently solvable in time polynomial in the size of the input 
$(b,\mathcal{F})$, for example, when all the intersection 
probabilities are given, by the principle of inclusion and 
exclusion~\eqref{eq:inclusion-exclusion}. More generally, 
when $\mathcal{F}$ has size exponential in the number of events $n$, 
we can solve Hailperin's linear program in time polynomial in 
the assumed-exponential input size.
\end{remark}

Issues related to the complexity of Boole's problem have been 
addressed in previous research.
Pitowsky~\cite{Pitowsky1991} highlights Boole's problem and observes
its probable intractability due to its close relationship to the geometry 
of correlation polytopes.
Boros, Scozzari, Tardella, and Veneziani~\cite{BorosSTV2014}
and Boros and Lee~\cite{BorosL2026} highlight the open complexity
status of the problem even in the case of sets of size at most two.
They also observe that a closely related problem,
deciding the feasibility of a given abstract vector $b$ of
probabilities, is NP-hard, by either direct proof~\cite{Pitowsky1991} or 
reduction from the separation problem of either
a cut polytope (cf.~Deza and Laurent~\cite{DezaL1997}) or
a Boolean quadric polytope
(cf.~Jaumard, Hansen, and Poggi~de~Arag\~ao~\cite{JaumardHP1991}).
Among further closely
related NP-hard problems is the {\em probabilistic satisfiability} (PSAT)
problem of
Georgakopoulos, Kavvadias, and Papadimitriou~\cite{GeorgakopoulosKP1998},
which, given
as input $m$ clauses on $n$ Boolean variables, together with a given probability
for each of the $m$ clauses, asks whether there exists a distribution
of probabilities on the $2^n$ possible truth assignments to
the Boolean variables such that each clause is satisfied with exactly
the given probability. Moreover, Kavvadias and
Papadimitriou~\cite{KavvadiasP1990} have shown
that PSAT remains NP-hard even when all clauses consist of positive literals
only and contain at most two literals each, which also directly
establishes NP-hardness of deciding feasibility of a given abstract
vector $b$ in our present setting. Our present contribution thus is
essentially the fine-grained result that, when given a {\em feasible}
vector $b$ as input, it yet remains NP-hard to solve Hailperin's linear program
to optimality (minimum or maximum of the union probability $u=1-x_{\emptyset}$).
This contribution is motivated by data analysis applications in 
particular, where the vector $b$ is obtained from existing data, and 
hence the feasibility of $b$ is known to hold.

Geometrically, we observe that deciding whether a given abstract vector 
$b$ is feasible and deciding whether a given feasible 
vector $b$ has a realization with union probability $u$ 
both admit characterization via polytopes that are 
elementary coordinate projections of the Venn polytope $\vp^{(\mathcal{F})}$. 
We recall that coordinate projections of a polytope are easily obtained 
from the vertex representation by projecting the vertices and taking
the convex hull of the projections; we also recall Example~\ref{exa:vp}
for a Venn polytope in vertex representation.

The {\em union polytope} 
$\up^{(\mathcal{F})}\subseteq\mathbb{Q}^{|\mathcal{F}|+1}$ 
is obtained by projecting the Venn polytope $\vp^{(\mathcal{F})}$ 
to coordinates $x_\emptyset$ and $b_S$ for all $S\in\mathcal{F}$. 
The (generalized%
\footnote{When $\mathcal{F}=\binom{[n]}{2}\cup \binom{[n]}{1}$, the 
polytope $\cp^{(\mathcal{F})}\subseteq\mathbb{Q}^{|\mathcal{F}|}$ is known
as the {\em correlation polytope}~\cite{Pitowsky1991}. 
Correlation polytopes and the associated
{\em cut polytopes} are well-studied classes of polytopes related to each
other by the so-called covariance mapping; also the generalized 
correlation polytopes appear in earlier work, 
cf.~e.g.~Deza and Laurent~\cite[Chap.~I.5]{DezaL1997}.}) 
{\em correlation polytope} 
$\cp^{(\mathcal{F})}\subseteq\mathbb{Q}^{|\mathcal{F}|}$
is obtained by projecting the Venn polytope $\vp^{(\mathcal{F})}$ 
to coordinates $b_S$ for all $S\in\mathcal{F}$. 
We immediately observe the coordinate-projection chain 
$\cp^{(\mathcal{F})}\leq\up^{(\mathcal{F})}\leq\vp^{(\mathcal{F})}$.
Correlation polytopes~\cite{Pitowsky1991}, also studied more recently 
under the name {\em marginal polytopes} in the machine-learning 
community~\cite{WainwrightJ2008}, are a well-known and systematically 
studied class of polytopes via cut polytopes and cut cones in the
geometry of metrics~\cite{DezaL1997}, 
whereas the union polytopes are apparently a less studied family of 
polytopes.

\begin{example}[Union polytopes and correlation polytopes]
\label{exa:up-cp}
Halfspace-representations of union polytopes $\up^{(\mathcal{F})}$
correlation polytopes $\cp^{(\mathcal{F})}$, 
writing $u=1-x_\emptyset$ for the union probability:
\[
\begin{array}{@{}c|c@{\ }c@{\ }c@{\ }c@{}}
\mathcal{F}&
\{1,2\}&\{1\},\{2\}&\{1\},\{1,2\}&\{1\},\{2\},\{1,2\}
\\[1mm]\hline
\up^{(\mathcal{F})}&\raisebox{1mm}{$\left[\begin{array}{@{\ }r@{\ }c@{\ }l@{}}
b_{\{1,2\}}&\geq& 0\\
b_{\{1,2\}}&\leq& u\\
u&\leq& 1
\end{array}\right]$} &
\raisebox{-1mm}{$\left[\begin{array}{@{\ }r@{\ }c@{\ }l@{}}
b_{\{1\}}&\leq& u\\
b_{\{2\}}&\leq& u\\
u&\leq& b_{\{1\}}+b_{\{2\}}\\
u&\leq& 1\\
\end{array}\right]$} &
\raisebox{-1mm}{$\left[\begin{array}{@{\ }r@{\ }c@{\ }l@{}}
b_{\{1,2\}}&\geq& 0\\
b_{\{1,2\}}&\leq&b_{\{1\}}\\
b_{\{1\}}&\leq& u\\
u&\leq& 1\\
\end{array}\right]$} &
\raisebox{-3mm}{$\left[\begin{array}{@{\ }r@{\ }c@{\ }l@{}}
b_{\{1\}}+b_{\{2\}}&=&u+b_{\{1,2\}}\\
b_{\{1,2\}}&\geq& 0\\
b_{\{1,2\}}&\leq& b_{\{1\}}\\
b_{\{1,2\}}&\leq& b_{\{2\}}\\
b_{\{1\}}+b_{\{2\}}&\leq& 1+b_{\{1,2\}}\\
\end{array}\right]$}
\\[2mm]\hline
\cp^{(\mathcal{F})}&
\raisebox{4mm}{$\left[\begin{array}{c}
b_{\{1,2\}}\geq 0\\
b_{\{1,2\}}\leq 1
\end{array}\right]$} &
\left[\begin{array}{c}
b_{\{1\}}\geq 0\\
b_{\{2\}}\geq 0\\
b_{\{1\}}\leq 1\\
b_{\{2\}}\leq 1
\end{array}\right] &
\raisebox{2mm}{$\left[\begin{array}{r@{\ }c@{\ }l@{}}
b_{\{1,2\}}&\geq& 0\\
b_{\{1,2\}}&\leq& b_{\{1\}}\\
b_{\{1\}}&\leq& 1\\
\end{array}\right]$} &
\raisebox{0mm}{$\left[\begin{array}{r@{\ }c@{\ }l@{}}
b_{\{1,2\}}&\geq& 0\\
b_{\{1,2\}}&\leq& b_{\{1\}}\\
b_{\{1,2\}}&\leq& b_{\{2\}}\\
b_{\{1\}}+b_{\{2\}}&\leq& 1+b_{\{1,2\}}\\
\end{array}\right]$}
\end{array}\,.
\]
In particular, we observe both the Boole--Frech\'et 
bound \eqref{eq:boole-frechet} in $\up^{(\{1\},\{2\})}$ and 
the inclusion--exclusion formula~\eqref{eq:inclusion-exclusion} for $n=2$ 
in $\up^{(\{1\},\{2\},\{1,2\})}$ above. We postpone a short generic 
analysis of the geometry of the polytopes to Section~\ref{sect:preliminaries}.
\end{example}

Geometrically, an equivalent formulation of Boole's problem is 
now easily seen to be the following: given a point $b$ 
in a correlation polytope $\cp^{(\mathcal{F})}$ as input, find the preimage 
of $b$ under coordinate projection in the corresponding union 
polytope $\up^{(\mathcal{F})}$. This preimage is exactly 
the interval of the union probability $u=1-x_{\emptyset}$ and is particularly 
simple to find by substituting $b$ into the halfspaces 
of $\up^{(\mathcal{F})}$; the reader is encouraged to try this out with 
the union polytopes $\up^{(\mathcal{F})}$ in Example~\ref{exa:up-cp} above.

In spite of the appealing simplicity of the small examples above, 
it is however unlikely that the correlation and union polytopes in general 
admit a simple description for their halfspace-representations. 
Indeed, as discussed above, deciding the feasibility, or, what is the same,
deciding membership in the correlation polytope, for a given abstract
vector $b$ is known to be NP-hard. For completeness, we state this fact as
a theorem. 

\begin{restatable}[Hardness of membership in a correlation polytope;~\cite{DezaL1997,JaumardHP1991,KavvadiasP1990,Pitowsky1991}]{theorem}{feasibility}
\label{thm:cp}
Given a set family
$\emptyset\neq\mathcal{F}\subseteq 2^{[n]}\setminus\{\emptyset\}$ and 
a rational vector
$b=(b_S\in[0,1]:S\in\mathcal{F})$ as input, it is NP-hard to decide 
whether $b\in\cp^{(\mathcal{F})}$. Moreover, this holds even when 
$\mathcal{F}$ consists of sets of size at most two. 
\end{restatable}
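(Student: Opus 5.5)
We prove Theorem~\ref{thm:cp} by a polynomial-time reduction from a classical NP-hard problem to membership in $\cp^{(\mathcal{F})}$ with $\mathcal{F}=\binom{[n]}{1}\cup E$ for a graph $G=([n],E)$, so that every set in $\mathcal{F}$ has size at most two. The cleanest route is through the Boolean quadric (correlation) polytope. For $\mathcal{F}=\binom{[n]}{1}\cup E$, the vertices of $\cp^{(\mathcal{F})}$ are exactly the $0/1$ vectors $b^{(T)}$ with $b^{(T)}_{\{i\}}=\iv{i\in T}$ and $b^{(T)}_{\{i,j\}}=\iv{i,j\in T}$, for $T\subseteq[n]$. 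This holds because the vertices of $\vp^{(\mathcal{F})}$ are the unit vectors $x_T=1$, and coordinate projection maps them to these vectors. Consequently, maximizing a linear functional over $\cp^{(\mathcal{F})}$ is the same as maximizing the quadratic pseudo-Boolean function $\sum_i c_i y_i+\sum_{ij\in E} c_{ij}y_iy_j$ over $y\in\{0,1\}^n$.

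The first step is to choose the hard optimization problem: maximum independent set (equivalently, max-cut via the covariance map). With $c_i=1$ and $c_{ij}=-n$, say, the maximum of the functional over $\cp^{(\mathcal{F})}$ equals the independence number $\alpha(G)$. Indeed, any $T$ containing an edge can be improved by deleting one endpoint, which loses $1$ and gains at least $n$. It is NP-hard to decide whether $\alpha(G)\ge k$.

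The second step converts this optimization problem into a membership question. This conversion is the main obstacle, since the standard equivalence of optimization and membership (separation) via the ellipsoid method of Gr\"otschel--Lov\'asz--Schrijver is a Turing reduction, not a Karp reduction. If Turing reductions suffice for ``NP-hard'', I would argue as follows: $\cp^{(\mathcal{F})}$ is a full-dimensional $0/1$-polytope (full dimension follows from the vertices $b^{(\emptyset)}$, $b^{(\{i\})}$ and $b^{(\{i,j\})}$), so it is well-described. A polynomial-time membership oracle then yields a polynomial-time optimization oracle by GLS, which decides $\alpha(G)\ge k$. Alternatively, and more directly, I would cite Pitowsky~\cite{Pitowsky1991} or Kavvadias--Papadimitriou~\cite{KavvadiasP1990}: positive 2-PSAT, which asks for a distribution on assignments with prescribed probabilities of clauses $y_i$ and $y_i\vee y_j$, is NP-hard. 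Since $|B_i\cup B_j|=b_{\{i\}}+b_{\{j\}}-b_{\{i,j\}}$, such an instance maps linearly and bijectively to an instance of membership in $\cp^{(\mathcal{F})}$ with $\mathcal{F}=\binom{[n]}{1}\cup E$. This gives a Karp reduction, and the size-at-most-two clause matches the size-at-most-two sets.

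Finally, for the general statement, the family constructed above is itself an instance of the general problem, so hardness of the restricted version implies hardness of the general one. I would present the PSAT transfer as the main proof and mention the cut-polytope route (via the covariance map, as in \cite{DezaL1997}) as an alternative.
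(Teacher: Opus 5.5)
Your proposal is correct and takes essentially the paper's approach: the paper gives no self-contained proof of this theorem but takes it from the literature, most directly from the Kavvadias--Papadimitriou NP-hardness of positive 2-PSAT~\cite{KavvadiasP1990}, with separation/optimization over the Boolean quadric or cut polytope (your GLS alternative) as the other cited route. One small repair to your PSAT translation: inclusion--exclusion presumes $\Pr(y_i)$ is prescribed for every variable, so instead use the complemented events $B_i=\{y_i\text{ false}\}$; then a positive clause on variable set $S$ with probability $p_S$ becomes $b_S=1-p_S$, and the instance is feasible iff $b\in\cp^{(\mathcal{F})}$ with $\mathcal{F}$ the family of clause variable sets, with no singleton clauses needed.
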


Similarly, deciding membership in a union polytope is NP-hard; we prove the
following theorem in the appendix as a corollary of the proof of 
Theorem~\ref{thm:minmax}. 

\begin{restatable}[Hardness of membership in a union polytope]{theorem}{unionmember}
\label{thm:up}
Given a set family 
$\emptyset\neq\mathcal{F}\subseteq 2^{[n]}\setminus\{\emptyset\}$, 
a rational vector $b=(b_S\in[0,1]:S\in\mathcal{F})$, and a rational
number $x_\emptyset\in[0,1]$ as input, it is NP-hard to 
determine whether $(b,x_\emptyset)\in\up^{(\mathcal{F})}$.
Moreover, this holds even when $\mathcal{F}$ consists of sets of size 
at most two and $b\in\cp^{(\mathcal{F})}$.
\end{restatable}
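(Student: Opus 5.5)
We derive the statement from Theorem~\ref{thm:minmax}, which we may assume, via a polynomial-time Turing reduction. Fix an input $(\mathcal{F},b)$ with $b\in\cp^{(\mathcal{F})}$ and $\mathcal{F}$ consisting of sets of size at most two. The preimage of $b$ under the coordinate projection $\up^{(\mathcal{F})}\to\cp^{(\mathcal{F})}$ is a segment $\{(b,x_\emptyset):x_\emptyset\in[\ell,h]\}$. Here $1-h$ and $1-\ell$ are exactly the minimum and maximum union probabilities over realizations of $b$. So it suffices to show that $\ell$ and $h$ can be computed exactly in polynomial time using an oracle that decides membership of rational points $(b,x_\emptyset)$ in $\up^{(\mathcal{F})}$. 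Every oracle query keeps the same $b$, so the restrictions that sets have size at most two and that $b\in\cp^{(\mathcal{F})}$ are preserved.

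\textbf{Step 1: bit size of the endpoints.} The value $h$ is the optimum of Hailperin's linear program maximizing $x_\emptyset$ subject to \eqref{eq:param-zeta} with $b$ fixed; $\ell$ is the corresponding minimum. This program has exponentially many variables but only $m=|\mathcal{F}|+1$ equality constraints. Hence some optimal solution is basic, with support on at most $m$ columns, and each column is a $0/1$ vector.
\begin{itemize}
\item By Cramer's rule, each coordinate of a basic solution is a ratio of $m\times m$ determinants.
\item The denominator is the determinant of a $0/1$ basis matrix, so by Hadamard's bound it is an integer of absolute value at most $m^{m/2}$.
\item After clearing the denominators of $b$, the numerators have polynomial bit size in the input.
\end{itemize}
Thus $\ell$ and $h$ are rationals whose denominators are bounded by an explicit $D$ with $\log D$ polynomial in the input size. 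Making this bound precise is the step needing the most care: it must hold despite the exponential number of variables, which is why we use Hadamard's bound on the $0/1$ basis matrix.

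\textbf{Step 2: binary search and rounding.} To find $h$, we binary search over $x_\emptyset\in[0,1]$. Each query asks whether $(b,x_\emptyset)\in\up^{(\mathcal{F})}$, and since the preimage is a segment, membership is monotone on each side of $[\ell,h]$. We first locate a single feasible value $x^\ast$ with $(b,x^\ast)\in\up^{(\mathcal{F})}$. Then we search on $[x^\ast,1]$ for the upper endpoint, and symmetrically on $[0,x^\ast]$ for the lower endpoint.

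Finding $x^\ast$ needs justification, because the segment may be a single point. We use the endpoint-denomination bound instead of a blind search:
\begin{itemize}
\item Locating $h$ only requires testing candidate rationals with denominator at most $D$.
\item Alternatively, we can compute an interior point directly: any explicit point of the preimage with polynomial bit size works, and the search from it is standard.
\end{itemize}
After $O(\log D)$ queries we have localized $h$ to an interval of length below $1/(2D^2)$. This interval contains at most one rational with denominator at most $D$, and we recover it exactly by continued fractions. The same procedure recovers $\ell$.

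\textbf{Conclusion.} The procedure computes the optimal union interval $[1-h,1-\ell]$ using polynomially many membership queries. Theorem~\ref{thm:minmax} says computing these endpoints is NP-hard, even with sets of size at most two and feasible $b$. Therefore deciding $(b,x_\emptyset)\in\up^{(\mathcal{F})}$ is NP-hard under these same restrictions.

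\textbf{Route to a many-one reduction.} If a many-one reduction is preferred, we would inspect the proof of Theorem~\ref{thm:minmax}. That reduction presumably produces $b$ and a threshold $t$ such that the source instance is a yes-instance if and only if the minimal (respectively maximal) union probability is at most (at least) $t$. In the maximal case, a single membership query at $x_\emptyset=1-t$ suffices. In the minimal case the natural single query has the opposite monotonicity, so the threshold must be chosen at a value that lies inside $[\ell,h]$ exactly on yes-instances. This version depends on details of the main proof not shown here; the Turing reduction above does not.
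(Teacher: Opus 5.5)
\textbf{Gap: the starting point of the binary search.} Your reduction needs a starting point $x^\ast$ of the fiber $[\ell,h]$, and you do not supply one. With $b$ fixed, the oracle is only a membership oracle for the segment $[\ell,h]\subseteq[0,1]$. A query outside the segment returns ``no'' and does not say on which side of the segment it lies.

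As you note, the segment may be a single point, or have length near $1/D^2$. In that case every unsuccessful query rules out only the queried value. Your first option, testing candidate rationals with denominator at most $D$, then needs on the order of $D^2$ queries, which is exponential because only $\log D$ is polynomial. Your second option, to ``compute an interior point directly'', is exactly the missing step. For an arbitrary feasible $b$ you give no method to produce the value $x_\emptyset$ of some realization. It is not clear one exists, since this means producing part of a realization of a vector whose feasibility is NP-hard to decide. So your claim that the Turing reduction does not depend on the details of the proof of Theorem~\ref{thm:minmax} is not justified. Used as a black box, that theorem gives you hard instances but no point in their fibers.

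\textbf{How the paper closes it.} The paper opens up the reduction instead. It takes the instance from the proof of Theorem~\ref{thm:minmax}(1). There the explicit realization \eqref{eq:min-union-feasible} has union probability $1$, so $x_\emptyset=0$ lies in the fiber, that is, $u_{\max}=1$. Hence $(b,1-u)\in\up^{(\mathcal{F})}$ holds if and only if $u\geq\chi_f(G)/n$, a predicate that is monotone on all of $[0,1]$. Binary search from $u=1/2$ plus continued-fraction rounding then recovers $\chi_f(G)$. This is essentially your Step~2, with $D$ a denominator bound for $\chi_f(G)/n$. The restrictions on $\mathcal{F}$ and $b$ are inherited from that instance.

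\textbf{A smaller error in your closing remark.} A single query at $x_\emptyset=1-t$ tests $u_{\min}\le t\le u_{\max}$. It therefore decides $u_{\min}\le t$ in one query precisely when $u_{\max}\ge t$ is already known, which is the case for these minimum instances. In the maximum case you would instead have to know $u_{\min}\le t$, which you have not argued. So your monotonicity claim for the two cases is backwards.
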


In contrast to the complexity of the halfspace-representations of the 
union and correlation polytopes, the vertex-representations of the polytopes
admit description by coordinate projection from the vertices of the Venn 
polytopes; we record, and prove in the appendix, basic geometric 
facts about the polytopes in 
Propositions~\ref{prop:venn}~to~\ref{prop:union} in 
Section~\ref{sect:preliminaries} as preliminaries to the
proof of our main theorem in Section~\ref{sect:hardness}. 
One immediate consequence of the zero-one vertex-representations observed in 
Propositions~\ref{prop:venn}~to~\ref{prop:union} 
is that the membership problems in Theorems~\ref{thm:cp}~and~\ref{thm:up}
are in fact NP-complete by Carath\'eodory's theorem.

Let us now give a short overview of the proof of our main theorem
(Theorem~\ref{thm:minmax}). For NP-hardness of the minimum union probability
problem, we proceed with a straightforward reduction from the fractional graph
coloring problem~\cite{GrotschelLS1981,LundY1994,ScheinermanU1997}. 
The NP-hardness
of the maximum union probability problem requires somewhat more work. 
Our approach is to transform a restricted linear program for maximum
union probability in steps to a variant of its dual linear program whose 
polyhedron $\phi_G$ has facets that enable eventual relation to 
the $k$-clique problem on a given graph $G$. We establish this relation 
with standard techniques for rational polyhedra~\cite{GrotschelLS1993} by 
transforming a validity oracle (obtainable from an oracle for maximum 
union probability) for $\phi_G$ via the polar polytope $\phi_G^*$ and its polar 
$\phi_G^{**}=\phi_G$ to a membership oracle for $\phi_G$, which enables
immediate solution of the $k$-clique problem on $G$.

We refer to the monographs of Gr\"otschel, Lov\'asz, and 
Schrijver~\cite{GrotschelLS1993} for pertinent terminology on rational
polyhedra and Ziegler~\cite{Ziegler2000} for zero-one polytopes; see
also Schrijver~\cite{Schrijver2003A,Schrijver2003B,Schrijver2003C}.
For fractional graph theory, we refer to Scheinerman and 
Ullman~\cite{ScheinermanU1997}.

\subsection*{Related work}

\paragraph*{Generalized Bonferroni inequalities and local lemmata in probability and logic.}

A large body of work studies generalized Bonferroni~\cite{Bonferroni1936} inequalities, with or without Hailperin's~\cite{Hailperin1965} linear programming formulation. Determining the best Bonferroni bounds for binomial moments $S_k=\sum_{K\in\binom{[n]}{k}}|B_K|$ given up to order $k \leq 2$ is studied by~Chung and Erdős~\cite{chungErdos1952}, Dawson and Sankoff~\cite{dawsonSankoff1967inequality}, Galambos~\cite{Galambos1977}, Kounias and Marin~\cite{ Kounias1968, KouniasMarin1976},  Kwerel~\cite{Kwerel1975}, Sathe et al. \cite{SathepPradhanShah1980}, Hunter~\cite{Hunter1976}, Worsley~\cite{Worsley1982}, de Caen~\cite{Caen1997lower}, Kuai et al. \cite{kuai2000lower, Kuai2000}, Frolov~\cite{frolov2012bounds},  and Yang et al.~\cite{YangAlajajiT2014}. More generalized versions for $k \geq 2$ were also studied by Gallot~\cite{gallot1966bound}, Prékopa and Gao~\cite{prekopa2005bounding}, Galambos and Mucci~\cite{galambos1Mucci980}, and Kwerel~\cite{Kwerel1975} providing several sharp inequalities. A generalized version of the union problem was studied by Galambos~\cite{Galambos1977}, Platz~\cite{platz1985sharp}, Prékopa~\cite{prekopa1988, prekopa1992inequalities}, Boros and Prékopa~\cite{BorosPrekopa1989}, Galambos and Simonelli~\cite{GalambosS1996}, Min-Young~\cite{MinYoung1991}, Sibya~\cite{Sibya1992}, Spencer~\cite{Spencer1994}, and Dohmen~\cite{Dohmen2003} for $\Pr(X \geq t)$; where $X$ is a random variable that counts the number of events occur among $n$ many events and $t \in [n]$. Hailperin's seminal study~\cite{Hailperin1965} was followed by a more systematic study of sentential probability logic~\cite{Hailperin1996}; Nilsson~\cite{nilsson1986} models probabilities of logical sentences, where union probabilities arise from combining dependent events under consistency constraints. Jaeger’s~\cite{jaeger2001} probabilistic inference rules automates such tight bounds on the target probability as a function of the input constraints.
In a related setting, bounding the probability of the union of $n$ bad events, the Lov\'{a}sz Local Lemma~\cite{lovaszErdos1973} shows that if each bad event has probability at most $p$ and depends on at most $d$ others, then the union probability is strictly less than $1$, hence a good outcome exists, provided that $p(d+1) \le \exp(-1)$; a constructive randomized version was later given by Moser and Tardos~\cite{tardosmoser2010}. Barvinok \cite{Barvinok2025} recently showed that, under a stricter condition, intersections of order $O(\log(n/\epsilon))$ suffice to approximate the good event's probability up to relative error $\epsilon$.

\paragraph*{Approximate and exact inclusion--exclusion in algorithms and complexity.}

Linial and Nisan~\cite{LinialN1990} present an {\em approximate} level-wise variant of the inclusion--exclusion formula \eqref{eq:inclusion-exclusion} relying on Chebyshev polynomials and use it to study distributions that fool constant-depth circuits, followed by work of~Luby and Veli\v{c}kovi\'c~\cite{luby1991deterministic}, Bazzi~\cite{Bazzi2009}, Razborov \cite{Razborov2009} and Braverman~\cite{Braverman2010}; a related problem is to approximate the number of satisfying assignments for a DNF with $m$ clauses and $n$ variables was studied by Luby and Veli\v{c}kovi~\cite{luby1991deterministic}, Kahn et al.~\cite{KahnLS1996} show logarithmic-size intersections suffice for uniqueness of the DNF assignment, and later Melkman and Shimony~\cite{MelkmanS1997} study its construction. Further works that improve the Linial-Nisan framework in specific applications include Klein and Metsch~\cite{KleinM2007} (cryptography) and Shrestov~\cite{Sherstov2008} (boolean symmetric functions).
The {\em exact} principle of inclusion and exclusion and its generalization
to M\"obius inversion on partially ordered 
sets~\cite{Stanley2012} other than $(2^{[n]},\subseteq)$ 
are fundamental primitives in the study fine-grained and parameterized 
algorithms, cf.~Fomin and Kratsch~\cite{FominK2010}
and Cygan et al.~\cite{CyganFKLMMPS2015}, 
including the study of canonical hard problems such as matrix
permanent~\cite{Ryser1963} and graph coloring~\cite{BjorklundHK2009}.

\paragraph*{Data mining.}

Downwards-closed set families arise naturally in data mining, in the task of finding association rules or frequent sets from zero-one-valued data. 
In that problem, the data is a (large) zero-one matrix $D$, and the task is to find all sets $S$ of columns such that at least a fraction $\epsilon$ of the rows of the matrix have a $1$-element in each column of $S$. In our notation, the task is to find the family $\mathcal{F} = \{ S \subseteq [n] : b_S \geq \epsilon \}$. In applications, the matrix $D$ is typically sparse, and the zeros and ones have an asymmetric role; hence, for suitably chosen~$\epsilon$, the family $\mathcal{F}$ will have a reasonable size. The algorithms for finding the collection $\mathcal{F}$ are based on a level-wise search: first find the singleton sets satisfying the criterion, and when the sets of size $k$ are known, build all candidate sets $C$ of size $k+1$ such that the subsets of $C$ of size $k$ satisfy the criterion, and check whether $C$ satisfies the criterion. This so-called {\em Apriori} algorithm works well in many cases. The early papers and algorithms for this include Agrawal et al.~\cite{AgrawalS1993}, Agrawal and Srikant~\cite{AgrawalS1994}, Mannila et al.~\cite{MannilaT1994}, Agrawal et al.~\cite{AgrawalM1996}, and have since been followed by many others.\footnote{On March 26, 2026, a search on Scopus for ``association rule mining'' returned 20,581 documents.} Theoretical analyses of the question also exist, but to our knowledge, the question we study in this paper has not been considered in the frequent set literature. Note that in the setting of frequent sets, the vector $b$ corresponding to the collection $\mathcal{F}$ is known to be feasible, so Boole’s problem is significant exactly in the case where the feasibility of $b$ is known.

\paragraph*{Further polytopes and extension complexity.}

Wainwright and Jordan~\cite{WainwrightJ2008} derive the notion of a marginal polytope by starting from a set of sufficient statistics and the corresponding mean parameters (the expectations of the sufficient statistics). Then the set of all distributions that realize the mean parameters is convex and hence can be described by inequalities, giving the marginal polytope. It is mostly used in the case where the sufficient statistics correspond to singletons or pairs of variables in the underlying probability space, but one could also consider, e.g., the case where there is a sufficient statistic for each intersection probability. Then the marginal polytope for this set of sufficient statistics is the correlation polytope. Related to correlation and union polytopes, Yang, Alajaji, and Takahara~\cite{YangAT2015} seek to bound the union probability from weighted-aggregate input of the form $\{|B_i|\}_{i=1}^n$ and $\left\{\sum_{j=1}^n c_j\,|B_i\cap B_j|\right\}_{i=1}^n$, which geometrically corresponds to intersecting the union polytope with the affine subspace defined by the input and studying the resulting polytope; whether our present NP-hardness results extend to exact optimization in such intersection polytopes remains an interesting direction for future work. A yet further line of work is the extension complexity of the correlation polytopes, cf.~Aboulker et al.~\cite{AboulkerFHMS2019}.

\section{Geometry of the Venn, union, and correlation polytopes}

\label{sect:preliminaries}

The following propositions collect some immediate observations 
on the geometry of the Venn, union, and correlation polytopes. 
The proofs of the propositions appear in the appendix.
We need short preliminaries.
For a nonnegative integer $n$, 
we write $[n]=\{1,2,\ldots,n\}$ and~$2^{[n]}$ 
for the set of all subsets of $[n]$.
We write $\mathbb{Q}$ for the set of rational numbers. 
For a logical proposition $P$, let us work with 
{\em Iverson's bracket notation} and set $\iv{P}=1$ if $P$ is true 
and $\iv{P}=0$ if $P$ is false. 
Let $\emptyset\neq\mathcal{F}\subseteq 2^{[n]}$. 
Define the set-inclusion indicator 
vectors $\zeta^{(\mathcal{F},T)}\in\mathbb{Q}^{|\mathcal{F}|}$ for all 
$S\in\mathcal{F}$ and $T\subseteq [n]$ by the rule 
\begin{equation}
\label{eq:set-inclusion-indicator}
\zeta^{(\mathcal{F},T)}_S=\iv{S\subseteq T}\,.
\end{equation}
Let us write $e^{(T)}\in\mathbb{Q}^{2^n}$ for the standard basis vector defined 
for all $T,U\subseteq[n]$ by the rule $e^{(T)}_U=\iv{T=U}$.
Let us write 
$\uparrow\!S=\{T:S\subseteq T\subseteq [n]\}$ for the {\em up-set} of a set
$S\in\mathcal{F}$. For a set $V\subseteq\mathbb{Q}^d$, we write $\conv(V)$ for
the convex hull of $V$.

\begin{restatable}[Geometry of Venn polytopes]{proposition}{vpgeom}
\label{prop:venn}
We have the following.
\begin{enumerate}
\item
The Venn polytope 
$\vp^{(\mathcal{F})}\subseteq\mathbb{Q}^{|\mathcal{F}|+2^n}$
has dimension $2^n-1$ and has only zero-one-valued vertices
which are given by the representation
$\vp^{(\mathcal{F})}=\conv((\zeta^{(\mathcal{F},T)},e^{(T)}):T\subseteq[n])$.
\item
The number of vertices of $\vp^{(\mathcal{F})}$ is $2^n$ and 
a set of\/ $1+|\mathcal{F}|$ hyperplanes for $\vp^{(\mathcal{F})}$
is given by the equalities in \eqref{eq:param-zeta}.
\end{enumerate}
\end{restatable}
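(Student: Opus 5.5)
The plan is to exhibit $\vp^{(\mathcal{F})}$ as an affine image of the standard simplex and read off every claim from that.

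\emph{Parametrization by the atoms.} The equalities $\sum_{S\subseteq T}x_T=b_S$ for $S\in\mathcal{F}$ express each $b_S$ as a linear function of $x=(x_T:T\subseteq[n])$. What remains, namely $\sum_T x_T=1$ and $x_T\geq 0$, says exactly that $x$ lies in the standard simplex $\Delta=\conv(e^{(T)}:T\subseteq[n])\subseteq\mathbb{Q}^{2^n}$. Let $A$ be the linear map $x\mapsto(\sum_{T\in\uparrow S}x_T:S\in\mathcal{F})$. Then
\[
\vp^{(\mathcal{F})}=\{(Ax,x):x\in\Delta\},
\]
which is the image of $\Delta$ under the injective linear map $L\colon x\mapsto(Ax,x)$. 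It is injective because its second block is the identity.

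\emph{Vertices and dimension.} An injective linear map is an affine isomorphism onto its image. It therefore carries the vertices of $\Delta$ bijectively onto the vertices of the image, and it preserves dimension. This gives $\dim\vp^{(\mathcal{F})}=\dim\Delta=2^n-1$ and shows there are exactly $2^n$ vertices $L(e^{(T)})$. To compute them, note that $(Ae^{(T)})_S=\sum_{U\supseteq S}\iv{U=T}=\iv{S\subseteq T}=\zeta^{(\mathcal{F},T)}_S$. So $L(e^{(T)})=(\zeta^{(\mathcal{F},T)},e^{(T)})$, which is zero-one valued. Since $L$ is linear it commutes with convex hulls, so $\vp^{(\mathcal{F})}=\conv((\zeta^{(\mathcal{F},T)},e^{(T)}):T\subseteq[n])$.

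\emph{The hyperplanes.} The ambient space has dimension $|\mathcal{F}|+2^n$ and the polytope has dimension $2^n-1$, so its affine hull has codimension $|\mathcal{F}|+1$. The $1+|\mathcal{F}|$ equalities in \eqref{eq:param-zeta} all hold on $\vp^{(\mathcal{F})}$. They are linearly independent: the equation for $S$ is the only one involving $b_S$, and the equation $\sum_T x_T=1$ involves no $b$-coordinate and is nontrivial. Hence their common solution set is an affine subspace of dimension $|\mathcal{F}|+2^n-(|\mathcal{F}|+1)=2^n-1$ that contains the polytope. It therefore coincides with the affine hull, so these equalities form a set of $1+|\mathcal{F}|$ hyperplanes defining it.

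There is no real obstacle here. The only point needing care is to say explicitly that injectivity of $L$ is what makes vertices correspond bijectively and makes the dimension equal to that of the simplex, rather than merely bounded above by it.
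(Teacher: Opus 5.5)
Your proof is correct, but it takes a different route from the paper's for item 1.

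\textbf{The paper's route.} It works directly from the constraint system. It finds the vertices $(\zeta^{(\mathcal{F},T)},e^{(T)})$ by leaving out one constraint of \eqref{eq:param-zeta} at a time and solving the remaining tight system. It then obtains $\dim\vp^{(\mathcal{F})}\geq 2^n-1$ from affine independence of these $2^n$ vertices. Finally, it obtains $\dim\vp^{(\mathcal{F})}\leq 2^n-1$ from independence of the $1+|\mathcal{F}|$ equalities.

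\textbf{Your route.} You observe at the outset that the $b$-coordinates are a linear function of the $x$-coordinates. Hence $\vp^{(\mathcal{F})}$ is the image of the standard simplex under the injective linear map $x\mapsto(Ax,x)$. The vertex set, the vertex count, the convex-hull representation and the exact dimension then all transfer from the simplex at once. You need no separate lower and upper bounds. You use independence of the equalities only afterwards, together with the known dimension, to show that they cut out the affine hull. Your check of that independence, via the private $b_S$-coordinates, is valid.

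\textbf{What each buys.} Your version is more self-contained. It makes explicit why the polytope equals the convex hull of exactly these $2^n$ distinct points. The paper's enumeration sketch leaves this implicit; it tacitly relies on boundedness and the Minkowski--Weyl theorem. The paper's approach is the generic polyhedral computation and does not require spotting the simplex structure. In the end, both arguments amount to recognizing $\vp^{(\mathcal{F})}$ as a $(2^n-1)$-simplex.
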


\begin{restatable}[Geometry of correlation polytopes]{proposition}{cpgeom}
\label{prop:correlation}
We have the following.
\begin{enumerate}
\item
The correlation polytope 
$\cp^{(\mathcal{F})}\subseteq\mathbb{Q}^{|\mathcal{F}|}$
is full-dimensional and has only zero-one-valued vertices which are 
given by the representation 
$\cp^{(\mathcal{F})}=\conv(\zeta^{(\mathcal{F},T)}:T\subseteq [n])$.
\item
The number of vertices of $\cp^{(\mathcal{F})}$ is 
$|\!\bigwedge_{S\in\mathcal{F}}\,\{\uparrow\!S,2^{[n]}\setminus\!\uparrow\!S\}|$,
where the meet is taken in the set partition lattice of the set\/ $2^{[n]}$.
In particular, the number of vertices is $2^n$ if and only if 
$\{k\}\in\mathcal{F}$ for all $k\in[n]$. 
\item
A vector $b=(b_S\in[0,1]:S\in\mathcal{F})$ is feasible
if and only if $b\in\cp^{(\mathcal{F})}$. 
\end{enumerate}
\end{restatable}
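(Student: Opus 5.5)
The plan is to derive all three parts from Proposition~\ref{prop:venn}. The key fact is that the correlation polytope is the image of the Venn polytope under the coordinate projection $\pi$ onto the $b$-coordinates. A linear image of the convex hull of a finite set is the convex hull of the images, so Proposition~\ref{prop:venn}(1) gives
\[
\cp^{(\mathcal{F})}=\pi(\vp^{(\mathcal{F})})=\conv(\zeta^{(\mathcal{F},T)}:T\subseteq[n]).
\]
Each $\zeta^{(\mathcal{F},T)}$ is zero-one valued. A zero-one point cannot be a nontrivial convex combination of other points of $[0,1]^{|\mathcal{F}|}$, since it is a vertex of the cube. Hence every distinct $\zeta^{(\mathcal{F},T)}$ is a vertex, and these are all the vertices.

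For full-dimensionality, I will exhibit $|\mathcal{F}|+1$ affinely independent points. First, $\zeta^{(\mathcal{F},\emptyset)}=0$, because $\emptyset\notin\mathcal{F}$. Second, for each $S\in\mathcal{F}$ take $\zeta^{(\mathcal{F},S)}$. Order $\mathcal{F}$ by a linear extension of inclusion. Then $\zeta^{(\mathcal{F},S)}_{S'}=\iv{S'\subseteq S}$, so the matrix with rows $\zeta^{(\mathcal{F},S)}$ is unitriangular, hence invertible. Together with the origin this gives the required affine independence.

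For part (2), the vertices are exactly the distinct vectors $\zeta^{(\mathcal{F},T)}$. Two sets $T,T'$ give the same vector exactly when, for every $S\in\mathcal{F}$, they lie on the same side of the bipartition $\{\uparrow\!S,2^{[n]}\setminus\uparrow\!S\}$. So the classes of equal vectors are the blocks of the meet of these partitions, which proves the count.
\begin{itemize}
\item If every $\{k\}\in\mathcal{F}$, then the singleton coordinates of $\zeta^{(\mathcal{F},T)}$ are the indicator vector of $T$, so all $2^n$ vectors are distinct.
\item Conversely, suppose some $\{k\}\notin\mathcal{F}$. Compare $T=\emptyset$ and $T'=\{k\}$. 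Any $S\subseteq\{k\}$ with $S\neq\emptyset$ equals $\{k\}$, which is not in $\mathcal{F}$, so both vectors are $0$ and the count drops below $2^n$.
\end{itemize}

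Part (3) follows from the definitions. The vector $b$ is feasible iff Hailperin's system \eqref{eq:param-zeta} has a solution $x\ge0$ with the given $b$. That holds iff $(b,x)\in\vp^{(\mathcal{F})}$ for some $x$, i.e., iff $b\in\pi(\vp^{(\mathcal{F})})=\cp^{(\mathcal{F})}$. The only mildly delicate step is the vertex count in (2), namely checking that the equivalence classes of $T$ match the meet of the partitions. That check is direct from the definition of the meet as common refinement.
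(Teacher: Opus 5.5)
Your proposal is correct and follows essentially the same route as the paper. Both arguments project the Venn polytope's vertex representation, use the origin together with the unitriangular family $\zeta^{(\mathcal{F},S)}$, $S\in\mathcal{F}$, for full-dimensionality, count vertices via the bipartition meet with $\zeta^{(\mathcal{F},\emptyset)}=\zeta^{(\mathcal{F},\{k\})}$ for the converse, and read feasibility off Hailperin's linear program. Your remark that zero-one points are extreme points of the cube supplies the step the paper simply calls immediate.
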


\begin{restatable}[Geometry of union polytopes]{proposition}{upgeom}
\label{prop:union}
We have the following.
\begin{enumerate}
\item
The union polytope 
$\up^{(\mathcal{F})}\subseteq\mathbb{Q}^{|\mathcal{F}|+1}$ 
has dimension 
$|\mathcal{F}|+\iv{\mathcal{F}\neq 2^{[n]}\setminus\{\emptyset\}}$
and has only zero-one-valued vertices which are given by the 
representation 
$\up^{(\mathcal{F})}=\conv((\zeta^{(\mathcal{F},T)},\iv{T=\emptyset}):T\subseteq[n])$.
\item
When $\up^{(\mathcal{F})}$ has codimension $1$ 
it has the hyperplane
$x_\emptyset+\sum_{\emptyset\neq S\subseteq[n]}(-1)^{|S|+1}b_S = 1$.
\item
The number of vertices of $\up^{(\mathcal{F})}$ either
agrees with the number of vertices of $\cp^{(\mathcal{F})}$
or is one more than this; the latter case occurs if and only if 
there exists an $k\in[n]$ with $\{k\}\notin\mathcal{F}$.
\item
For every $b=(b_S\in[0,1]:S\in\mathcal{F})$ and $x_\emptyset\in[0,1]$
we have that $(b,x_\emptyset)\in\up^{(\mathcal{F})}$ if and only if 
$b$~is realizable and there exists a realization $x$ of $b$ with union 
probability $u=1-x_\emptyset$. 
\end{enumerate}
\end{restatable}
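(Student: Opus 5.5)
The plan is to derive everything from \cref{prop:venn} and \cref{prop:correlation}, using only two facts: a linear (here, coordinate) projection maps the convex hull of a finite set onto the convex hull of the projected points, and every zero-one point is a vertex of the convex hull of any set of zero-one points containing it, since it is already extreme in the unit cube.

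\emph{Item 1.} Projecting the vertices $(\zeta^{(\mathcal{F},T)},e^{(T)})$ of $\vp^{(\mathcal{F})}$ from \cref{prop:venn} to the coordinates $b_S$ ($S\in\mathcal{F}$) and $x_\emptyset$ gives $(\zeta^{(\mathcal{F},T)},e^{(T)}_\emptyset)=(\zeta^{(\mathcal{F},T)},\iv{T=\emptyset})$. This yields the stated representation, and every point listed is a zero-one vertex.

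For the dimension, note that $\cp^{(\mathcal{F})}$ is full-dimensional by \cref{prop:correlation} and is a coordinate projection of $\up^{(\mathcal{F})}$. Hence $\dim\up^{(\mathcal{F})}\in\{|\mathcal{F}|,|\mathcal{F}|+1\}$, and it equals $|\mathcal{F}|+1$ exactly when $x_\emptyset$ is not an affine function of $b$ on the vertex set. I view the vertices as functions of $T\in 2^{[n]}$. The question then becomes whether $T\mapsto\iv{T=\emptyset}$ lies in the linear span of the functions $T\mapsto\iv{S\subseteq T}$ for $S\in\mathcal{F}\cup\{\emptyset\}$; the constant function corresponds to $S=\emptyset$ and accounts for the affine part. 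The functions $\iv{S\subseteq T}$, $S\subseteq[n]$, form a basis of $\mathbb{Q}^{2^{[n]}}$ (the zeta basis). By Möbius inversion, $\iv{T=\emptyset}=\sum_{S\subseteq[n]}(-1)^{|S|}\iv{S\subseteq T}$ is the unique expansion, and all its coefficients are nonzero. So it lies in the span if and only if $\mathcal{F}\cup\{\emptyset\}=2^{[n]}$, which gives the dimension formula.

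\emph{Item 2.} When the codimension is $1$, we have $\mathcal{F}=2^{[n]}\setminus\{\emptyset\}$. The identity $x_\emptyset+\sum_{\emptyset\neq S}(-1)^{|S|+1}b_S=1$ is exactly the expansion above rearranged. It suffices to check it at every vertex, which amounts to $\sum_{\emptyset\neq S\subseteq T}(-1)^{|S|+1}=\iv{T\neq\emptyset}$, the binomial identity. It then extends to the whole polytope by convexity.

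\emph{Item 3.} The vertices of $\up^{(\mathcal{F})}$ are the distinct points $(\zeta^{(\mathcal{F},T)},\iv{T=\emptyset})$, and they map onto the distinct vertices $\zeta^{(\mathcal{F},T)}$ of $\cp^{(\mathcal{F})}$. A vertex of $\cp^{(\mathcal{F})}$ has two preimages exactly when its fibre $\{T:\zeta^{(\mathcal{F},T)}=\zeta\}$ contains both $\emptyset$ and some nonempty $T$. Only the fibre of $\zeta=0$ can contain $\emptyset$, so this happens exactly when some $T\neq\emptyset$ contains no member of $\mathcal{F}$. That occurs if and only if some singleton $\{k\}\notin\mathcal{F}$: in one direction take $T=\{k\}$; in the other, if all singletons lie in $\mathcal{F}$, then every nonempty $T$ contains one.

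\emph{Item 4.} By definition of coordinate projection, $(b,x_\emptyset)\in\up^{(\mathcal{F})}$ if and only if some $x$ satisfies \eqref{eq:param-zeta} with those $b$ and $x_\emptyset$. Such an $x$ is precisely a distribution on the Venn atoms realizing $b$ (take $\Omega=2^{[n]}$ and $B_i=\{T:i\in T\}$), and conversely any realization induces such an $x$. Its union probability is $1-x_\emptyset$.

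The only step needing care is the dimension count in item 1. The Möbius-basis argument handles it cleanly and also yields the hyperplane in item 2.
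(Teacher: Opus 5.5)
Your proof is correct. Items 3 and 4 match the paper: the paper phrases Item 3 through the meet of the partitions $\{\uparrow\!S,2^{[n]}\setminus\!\uparrow\!S\}$, which is the same fibre-counting argument you give.

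The real difference is the dimension count in Item 1. The paper proves $\dim\up^{(\mathcal{F})}=|\mathcal{F}|+1$ for $\mathcal{F}\neq 2^{[n]}\setminus\{\emptyset\}$ by exhibiting $|\mathcal{F}|+2$ explicit affinely independent vertices, split into two cases.
\begin{itemize}
\item If some $\{k\}\notin\mathcal{F}$, it takes $T\in\{\emptyset,\{k\}\}\cup\mathcal{F}$, which gives a triangular matrix.
\item If all singletons lie in $\mathcal{F}$, it picks a set $U\notin\mathcal{F}$ with $|U|\geq 2$ whose nonempty proper subsets all lie in $\mathcal{F}$, and takes $T\in\{\emptyset,U\}\cup\mathcal{F}$. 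It then has to certify invertibility of a non-triangular $(2^{|U|}-1)\times(2^{|U|}-1)$ block, using an auxiliary claim that builds explicit inverse matrices recursively.
\end{itemize}
The paper treats the codimension-one case separately, via inclusion--exclusion.

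You argue dually, with affine functionals on the vertex set. The functions $T\mapsto\iv{S\subseteq T}$, $S\subseteq[n]$, form a basis. So the M\"obius expansion $\iv{T=\emptyset}=\sum_{S\subseteq[n]}(-1)^{|S|}\iv{S\subseteq T}$ is the unique one, and all its coefficients are nonzero. Hence $x_\emptyset$ is an affine function of $b$ on the vertices exactly when every nonempty $S$ lies in $\mathcal{F}$.

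What your route buys:
\begin{itemize}
\item It handles both cases uniformly, with no case split and no auxiliary matrix claim.
\item It does not even need the full-dimensionality of $\cp^{(\mathcal{F})}$ as input, since it recovers it.
\item The same expansion, rearranged, is the hyperplane of Item 2, so Items 1 and 2 come out of one computation.
\end{itemize}
What the paper's route buys is an explicit simplex of $|\mathcal{F}|+2$ vertices spanning $\up^{(\mathcal{F})}$. That is more concrete information, but it is not needed for the proposition.
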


\section{Hardness of optimally solving Boole's problem}

\label{sect:hardness}

This section restates and proves our main theorem. 

\minmaxunion*

For convenience,
we present the proof in two parts, one part for the minimum union
probability, and one part for the maximum union probability. Before proceeding
with the first part, let us recall the fractional chromatic number problem
(e.g.~\cite[Chap.~3]{ScheinermanU1997}), which is known to 
be NP-hard~\cite{LundY1994}. Let us write $V(G)$ for the vertex
set and $\mathcal{I}(G)$ for the 
set of all nonempty independent sets of a graph $G$.
Given a graph $G$ as input, we are to 
compute the minimum value of $\sum_{I\in\mathcal{I}(G)}y_I$
subject to the constraints 
$\sum_{u\in I\in\mathcal{I}(G)}y_I\geq 1$ for all $u\in V(G)$
and $y_I\geq 0$ for all $I\in\mathcal{I}(G)$. This minimum value 
is the {\em fractional chromatic number} $\chi_f(G)$ of $G$. 
A map $y:\mathcal{I}(G)\rightarrow\mathbb{Q}$ that satisfies the constraints
is a {\em fractional coloring} of $G$.

\begin{proof}[Proof of Theorem~\ref{thm:minmax}, Item 1 (minimum union probability)]
We reduce from the fractional chromatic number problem. 
Let $G$ be a graph given as input.
Without loss of generality we may assume that the vertex set $V(G)=[n]$
and that the edge set $E(G)$ consists of 2-element subsets of $V(G)$.
Set 
\[
\mathcal{F}=\{\{u\}:u\in V(G)\}\cup E(G)\subseteq 2^{[n]}\,.
\]
Define $b:\mathcal{F}\rightarrow\mathbb{Q}$ by setting
\begin{alignat}{2}
\label{eq:vertices-one-over-n}
b_{\{u\}}&=\frac{1}{n}&\quad&\text{ for all $u\in V(G)$, and}\\
\label{eq:edges-zero}
b_{\{u,v\}}&=0&\quad&\text{ for all $\{u,v\}\in E(G)$}.
\end{alignat}
We immediately observe that $(b,\mathcal{F})$ is feasible; indeed,
to obtain a realization $x$, for each $T\subseteq V(G)=[n]$ we can set
\begin{equation}
\label{eq:min-union-feasible}
x_T=\begin{cases}
\frac{1}{n} &\text{if $|T|=1$},\\
0 & \text{otherwise}.
\end{cases}
\end{equation}
Suppose now that $x^*=(x_T^*\in [0,1]:T\subseteq[n])$ is a realization of $b$ 
that minimizes the union probability 
$\sum_{\emptyset\neq T\subseteq [n]}x_T^*$. Define $y=(y_T\in[0,1]:\emptyset\neq T\subseteq V(G))$ for all $\emptyset\neq T\subseteq V(G)$ by setting
$y_T=n\cdot x^*_T$. We claim that $y$ is zero outside of $\mathcal{I}(G)$
and a fractional coloring of $G$ when restricted to $\mathcal{I}(G)$;
thus $\sum_{\emptyset\neq T\subseteq V(G)}y_T\geq\chi_f(G)$ by definition
of fractional coloring. 
Indeed, from~\eqref{eq:edges-zero} and $x^*$ realizing $b$ 
we observe that $y_T=0$ unless $T\in\mathcal{I}(G)$. 
Moreover, from~\eqref{eq:vertices-one-over-n} and $x^*$ realizing $b$ 
we observe that $\sum_{u\in T\subseteq V(G)}y_T=1$ holds for all $u\in V(G)$.
Thus, the minimum union probability of $(b,\mathcal{F})$ is at least 
$\chi_f(G)/n$. Conversely, suppose that
$y^*$ is a fractional coloring of $G$ with 
$\sum_{I\in\mathcal{I}(G)}y_I=\chi_f(G)$. Extend this fractional coloring
to all subsets $\emptyset\neq T\subseteq[n]$ by setting $y^*_T=0$ 
if $T\notin\mathcal{I}(G)$.
By definition of fractional coloring, we have that  
$\sum_{u\in T\subseteq V(G)}y_T^*\geq 1$. Since all nonempty subsets 
of independent sets in $G$ are independent, without loss of generality we 
can assume that $\sum_{u\in T\subseteq V(G)}y_T^*=1$ holds for all $u\in V(G)$;
indeed, to ensure this we can iterate over all $u_0\in V(G)$ and do 
the following: 
while the slack $\left(\sum_{u_0\in T\in\mathcal{I}(G)} y_T^*\right)-1$ is 
positive, greedily transfer mass from $y_T^*$ to $y_{T\setminus\{u_0\}}^*$ 
for sets $u_0\in T\in\mathcal{I}(G)$ with $|T|\ge 2$ and greedily reduce mass 
in $y_{\{u_0\}}^*$, until the slack is zero. It is immediate that the greedy 
process zeroes slack at $u_0$ only and leaves other $u\neq u_0$ 
untouched in terms of their slack; furthermore, 
$\sum_{T\in\mathcal{I}(G)} y_T^*$ does not increase in the process.

Now define $x=(x_T\in[0,1]:T\subseteq [n])$ by setting 
$x_\emptyset=1-\frac{1}{n}\sum_{\emptyset\neq T\subseteq[n]}y_T^*$
and $x_T=\frac{1}{n}\cdot y_T^*$ for all $\emptyset\neq T\subseteq[n]$;
since fractional chromatic number always satisfies $\chi_f(G)\leq n$,
in particular we conclude that $x_\emptyset\geq 0$. 
From \eqref{eq:vertices-one-over-n} and 
\eqref{eq:edges-zero}
we observe that $x$ realizes $(b,\mathcal{F})$ with union probability
that equals $\chi_f(G)/n$.
We conclude that the minimum union probability of $(b,\mathcal{F})$ 
equals $\chi_f(G)/n$.
\end{proof}

Let us now turn to the proof of Theorem~\ref{thm:minmax}(2), where
the NP-hardness reduction will eventually be from the $k$-clique problem. 
We start with a lemma that shows an oracle for the linear program for 
the maximum union probabability problem on given input $(b,\mathcal{F})$ 
can be used to solve a graph-based linear program, which the subsequent 
proof will then use to solve $k$-clique.

\begin{lemma}[A transformed special case of the dual linear program]
\label{lem:special-dual}
Assume access to an oracle that solves the maximum union
probability problem for a set family $\emptyset\neq\mathcal{F}\subseteq2^{[n]}$ 
and a feasible rational $b:\mathcal{F}\rightarrow [0,1]$ given as query.
Then, given a nonempty $n$-vertex graph $G$ and edge weights
$w:E(G)\rightarrow\mathbb{Q}_{\geq 0}$
as input, we can solve the linear program to
\begin{equation}
\label{eq:dual-simplified}
\begin{split}
&\text{maximize $\sum_{\{u,v\}\in E(G)}w_{\{u,v\}}y_{\{u,v\}}$}\\
&\text{subject to $\sum_{E(G)\ni \{u,v\}\subseteq T}y_{\{u,v\}}\leq |T|-1$ 
for all cliques $\emptyset\neq T\subseteq V(G)$}
\end{split}
\end{equation}
for its optimum value in time polynomial in $n$ and the rational encoding 
length of $w$, by making a single query to the maximum union probability 
oracle, with $\mathcal{F}$ consisting of sets of size at most two.
\end{lemma}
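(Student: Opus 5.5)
The plan is to write down the LP dual of the maximum union probability problem (Hailperin's linear program) for a suitably chosen family $\mathcal{F}$ and vector $b$, and to show that for a small perturbation parameter this dual collapses exactly to \eqref{eq:dual-simplified}.

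\textbf{Step 1: the query.} Given $G$ on $V(G)=[n]$ and weights $w$, take
\[
\mathcal{F}=\{\{u\}:u\in[n]\}\cup\binom{[n]}{2}.
\]
Set $b_{\{u\}}=1/n$ for all $u$, $b_{\{u,v\}}=\delta w_{\{u,v\}}$ for $\{u,v\}\in E(G)$, and $b_{\{u,v\}}=0$ for non-edges. Here $\delta>0$ is a rational to be fixed later.

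\textbf{Step 2: feasibility.} For small $\delta$ a realization is
\[
x_{\{u,v\}}=\delta w_{\{u,v\}}\ \text{for edges},\qquad x_{\{u\}}=\tfrac1n-\delta\textstyle\sum_{e\ni u}w_e,
\]
with the remaining mass on $x_\emptyset$. This is nonnegative once $\delta\le 1/(n\sum_e w_e)$, so $b$ is feasible and the query is legal.

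\textbf{Step 3: the dual.} The primal maximizes $\sum_{T\neq\emptyset}x_T$ subject to \eqref{eq:param-zeta} with $b$ fixed. Its dual has a free variable $\lambda$ for the normalization, $\alpha_u$ for singletons and $\beta_{\{u,v\}}$ for pairs. It reads
\[
\text{minimize }\ \lambda+\tfrac1n\textstyle\sum_u\alpha_u+\delta\sum_{e\in E(G)}w_e\beta_e
\]
subject to $\lambda\ge 0$ (from $T=\emptyset$) and, for every $\emptyset\neq T\subseteq[n]$,
\[
\lambda+\textstyle\sum_{u\in T}\alpha_u+\sum_{\{u,v\}\subseteq T}\beta_{\{u,v\}}\ge 1 .
\]
Since non-edges carry $b=0$, their $\beta$'s cost nothing in the objective. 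They can therefore be taken arbitrarily large, which makes every constraint with $T$ not a clique of $G$ vacuous. Restricting to $\lambda=0$, $\alpha\equiv 1$ and $\beta_e=-y_e$ on edges, the clique constraints become exactly $\sum_{E(G)\ni e\subseteq T}y_e\le|T|-1$. The objective becomes $1-\delta\sum_e w_ey_e$. Hence this restricted dual shows
\[
\text{max union}\le 1-\delta\cdot\mathrm{OPT}_{\eqref{eq:dual-simplified}} .
\]
Also, \eqref{eq:dual-simplified} is bounded: every clique constraint with $T=\{u,v\}$ gives $y_e\le 1$, and $y=0$ is feasible. So $\mathrm{OPT}$ is attained at a vertex of encoding length polynomial in $n$ and the encoding of $w$.

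\textbf{Step 4: the reverse inequality (main obstacle).} I must show the restricted dual is optimal, i.e.\ some optimal dual solution has $\lambda=0$ and $\alpha\equiv1$, once $\delta$ is small. My first attempt is a sensitivity argument at $\delta=0$. There $b=(1/n,\dots,1/n,0,\dots,0)$, the maximum union is $1$, and it is attained by the primal solution with $x_{\{u\}}=1/n$ for all $u$. For small $\delta$ I would show directly that a primal realization achieves $1-\delta\,\mathrm{OPT}$, by exhibiting the dual of \eqref{eq:dual-simplified}. That dual is a fractional packing $z_T\ge0$ over cliques with $\sum_{T\supseteq e}z_T\ge w_e$ and value $\sum_T(|T|-1)z_T$, with equality $\sum_{T\supseteq e}z_T=w_e$ at an optimal $y>0$ by complementary slackness. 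From it I would build the realization $x_T=\delta z_T$ for cliques $|T|\ge2$, with singleton atoms chosen to absorb the remaining vertex mass and the rest on $x_\emptyset$. The pair marginals must come out as $\delta w_e$; if slack on some edges ($y_e=0$) spoils this, I would handle it by adding pure pair atoms $x_{\{u,v\}}$. The union probability is then
\[
1-\delta\sum_T(|T|-1)z_T=1-\delta\,\mathrm{OPT},
\]
because each clique atom $T$ uses $|T|$ units of vertex mass while contributing only one unit to the union. Making all atoms nonnegative requires $\delta\le 2^{-\mathrm{poly}}$, which is where the bound on vertex encoding length enters.

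\textbf{Step 5: conclusion.} With $\delta$ fixed this way, the single oracle answer $m$ gives $\mathrm{OPT}=(1-m)/\delta$ in polynomial time, and every set in $\mathcal{F}$ has size at most two.
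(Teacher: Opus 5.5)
\textbf{The gap is in Step 4: you use the wrong LP dual of \eqref{eq:dual-simplified}.} Your Step 3 upper bound $1-\delta\,\mathrm{OPT}$ is correct. In \eqref{eq:dual-simplified}, however, the $y_e$ are free variables, as they must be, since they come from your free $\beta_e$. The LP dual is therefore $\min\sum_T(|T|-1)z_T$ subject to $\sum_{T\supseteq e}z_T=w_e$ for every edge $e$ and $z\ge 0$, with equalities. The covering system with $\ge w_e$ that you wrote, and the complementary-slackness appeal to an optimal $y>0$, belong to the sign-restricted problem $\max\{w^\top y: y\in\phi_G,\ y\ge 0\}$. That problem can have a strictly smaller value.

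For example, take $G=K_4$ with $w_e=1$ on every edge except $w_{\{3,4\}}=0$.
\begin{itemize}
\item With $y\ge 0$, the $4$-clique constraint caps the value at $3$, attained by the covering $z_{[4]}=1$.
\item With free $y$, the value is $4$: take $y_{\{3,4\}}=-1$, $y_{\{1,2\}}=0$, and all other $y_e=1$.
\end{itemize}
The covering $z_{[4]}=1$ has slack on $\{3,4\}$, so your construction would realize $b_{\{3,4\}}=\delta\neq 0$. Your proposed patch of adding pure pair atoms goes the wrong way, because atoms can only increase marginals. Indeed, such a ``realization'' would have union $1-3\delta$, which exceeds your own valid upper bound $1-4\delta$.

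\textbf{The repair stays within your framework.} Since \eqref{eq:dual-simplified} is feasible ($y=0$) and bounded, strong duality gives $z\ge 0$ with $\sum_{T\supseteq e}z_T=w_e$ exactly and $\sum_T(|T|-1)z_T=\mathrm{OPT}$. Setting $x_T=\delta z_T$ on cliques with $|T|\ge 2$ then reproduces every pair marginal exactly, with no patching. Every clique $T\ni u$ with $|T|\ge 2$ contains an edge at $u$, so $\sum_{T\ni u,|T|\ge2}z_T\le\sum_{e\ni u}w_e$. Hence your Step 2 choice $\delta\le 1/(n\sum_e w_e)$ already makes the singleton atoms nonnegative. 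No $2^{-\mathrm{poly}}$ bound or vertex-encoding argument is needed; only the case $w\equiv 0$ must be handled separately.

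\textbf{Comparison with the paper.} With this fix, your sandwich argument (weak duality on Hailperin's LP, strong duality on \eqref{eq:dual-simplified}, explicit clique-atom realization) is a valid alternative. The paper instead:
\begin{itemize}
\item drops $x_\emptyset$, since the union is automatically at most $1$;
\item dualizes the trimmed LP and discards the non-clique constraints;
\item shows by an exchange argument that every dual optimum has $y_{\{u\}}=1$. Moving $\epsilon$ from $y_{\{u_0\}}$ onto the edge variables at $u_0$ strictly lowers the objective because $c\le 1/n^2$.
\end{itemize}
The paper thus stays entirely on the dual side. Your route additionally exhibits an optimal realization explicitly.
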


\begin{proof}
Let us recall the maximum union probability problem as a linear program 
on a given query $(b,\mathcal{F})$ with $b$ feasible. Namely, we are to
\begin{equation}
\label{eq:max-union-lp}
\begin{split}
&\text{maximize $\sum_{\emptyset\neq T\subseteq[n]}x_T$}\\
&\text{subject to $\sum_{T\subseteq[n]}x_T=1$, $\sum_{S\subseteq T}x_T=b_S$ for all $S\in\mathcal{F}$, and $x_T\geq 0$ for all $T\subseteq[n]$}\,. 
\end{split}
\end{equation}
We assume we have an oracle subroutine that on query
$(b,\mathcal{F})$ with $b$ feasible 
outputs the value of \eqref{eq:max-union-lp}.

Let $G$ be a graph and $w:E(G)\rightarrow\mathbb{Q}_{\geq 0}$ a function
given as input. (We will discuss the input $w$ in detail only towards
the end of the proof of this lemma.) 
We may assume that $n\geq 2$ and that $G$ has at least one edge.
We may also assume that the vertex set $V(G)=[n]$ and that the edge
set $E(G)$ consists of $2$-element subsets of $V(G)$.
Fix
\[
\mathcal{F}=\{\{u\}:u\in V(G)\}\cup\{\{u,v\}:u,v\in V(G)\}\,.
\]
We will also fix $b_{\{u\}}=\frac{1}{n}$ and $b_{\{u,v\}}=0$ 
for all distinct $u,v\in V(G)$ with $\{u,v\}\notin E(G)$ in our
query to~\eqref{eq:max-union-lp} in what follows. 
With this fixing, the query $(b,\mathcal{F})$ 
to~\eqref{eq:max-union-lp} that we will make in what follows 
is defined by a function $c:E(G)\rightarrow[0,\frac{1}{n^2}]$ when we set
$b_{\{u,v\}}=c_{\{u,v\}}$ for all $\{u,v\}\in E(G)$; the upper bound 
$\frac{1}{n^2}$ for the values of $c$ is to ensure feasibility of the
resulting query $b$. Indeed, to see that such a function $c$ results 
in a feasible query $b$ to~\eqref{eq:max-union-lp}, observe that $b$ 
is realized by the atom probabilities $x$ defined for all $T\subseteq[n]$ by
\[
x_T=\begin{cases}
0 & \text{if $|T|>2$, or $|T|=2$ and $T\notin E(G)$},\\
c_T & \text{if $|T|=2$ and $T\in E(G)$},\\
\frac{1}{n}-\sum_{T\subsetneq U\subseteq[n]}x_U & \text{if $|T|=1$},\\
1-\sum_{\emptyset\neq U\subseteq[n]}x_U & \text{if $T=\emptyset$}.
\end{cases}
\]
Now let us assume a function $c:E(G)\rightarrow[0,\frac{1}{n^2}]$ and
thus a query $(b,\mathcal{F})$ has been fixed. We will proceed along a
sequence of linear programs equivalent to \eqref{eq:max-union-lp}. 
Since we have 
fixed $b_{\{u\}}=\frac{1}{n}$ for all $u\in V(G)$ and $x_T\geq 0$ holds 
for all $\emptyset\neq T\subseteq[n]$, we observe that 
\[
\sum_{\emptyset\neq T\subseteq[n]}x_T\leq \sum_{u\in V(G)}\sum_{u\in T\subseteq V(G)}x_T\leq n\cdot\frac{1}{n}=1\,. 
\]
This inequality in particular implies that we can remove the constraint
$\sum_{T\subseteq[n]}x_T=1$ and the variable $x_\emptyset$ 
from \eqref{eq:max-union-lp}; that is, an equivalent linear program is to
\begin{equation}
\label{eq:max-union-lp-trim}
\begin{split}
&\text{maximize $\sum_{\emptyset\neq T\subseteq[n]}x_T$}\\
&\text{subject to $\sum_{S\subseteq T}x_T=b_S$ for all $S\in\mathcal{F}$
and $x_T\geq 0$ for all $\emptyset\neq T\subseteq[n]$}.
\end{split}
\end{equation}
The dual of \eqref{eq:max-union-lp-trim} is to
\begin{equation}
\label{eq:max-union-lp-trim-dual}
\begin{split}
&\text{minimize $\sum_{S\in\mathcal{F}}b_Sy_S$}\\
&\text{subject to $\sum_{\mathcal{F}\ni S\subseteq T}y_S\geq 1$ for all $\emptyset\neq T\subseteq[n]$},
\end{split}
\end{equation}
where each variable $y_S$ for $S\in\mathcal{F}$ can take arbitrary rational
values. Since $b_{\{u,v\}}=0$ whenever $\{u,v\}\notin E(G)$, we observe
that in \eqref{eq:max-union-lp-trim-dual} each constraint for 
$\emptyset\neq T\subseteq[n]$ that is not a clique in $G$ can be 
trivially satisfied by assigning an arbitrarily large value to a 
variable $y_{\{u,v\}}$ for $T\supseteq\{u,v\}\notin E(G)$, 
without affecting the value of the objective function since $b_{\{u,v\}}=0$.
In particular, we can remove all the non-clique constraints and all the
variables $y_{\{u,v\}}$ with $\{u,v\}\notin E(G)$ 
from \eqref{eq:max-union-lp-trim-dual} to obtain the equivalent linear 
program to 
\begin{equation}
\label{eq:max-union-lp-trim-dual-clique}
\begin{split}
&\text{minimize $\frac{1}{n}\sum_{u\in V(G)}y_{\{u\}}+\sum_{\{u,v\}\in E(G)}c_{\{u,v\}}y_{\{u,v\}}$}\\
&\text{subject to $\sum_{V(G)\ni u\in T}y_{\{u\}}+\sum_{E(G)\ni \{u,v\}\subseteq T}y_{\{u,v\}}\geq 1$ for all cliques $\emptyset\neq T\subseteq V(G)$}.
\end{split}
\end{equation}
Now fix an arbitrary optimum solution $y^*$ to \eqref{eq:max-union-lp-trim-dual-clique}. Without loss of generality we can assume that 
$y^*_{\{u\}}=1$ holds for all $u\in V(G)$; 
indeed, $y^*_{\{u\}}\geq 1$ is immediate by the clique $T=\{u\}$, and 
the assumption $y^*_{\{u_0\}}=1+\epsilon$ for some $u_0\in V(G)$
and $\epsilon>0$ yields a contradiction to the optimality of 
$y^*$ by decreasing $y^*_{\{u_0\}}$ (with objective coefficient $\frac{1}{n}$) 
by $\epsilon$ and increasing each $y^*_{\{u_0,v\}}$ 
(with objective coefficient $c_{\{u_0,v\}}\leq\frac{1}{n^2}$)
by $\epsilon$ for each edge $\{u_0,v\}\in E(G)$ 
to obtain an overall decrease in the minimization objective of 
\eqref{eq:max-union-lp-trim-dual-clique} without 
violating any of the clique constraints.
Thus, a linear program equivalent to \eqref{eq:max-union-lp-trim-dual-clique}
is to
\begin{equation}
\label{eq:dual-simplified-1}
\begin{split}
&\text{minimize $1+\sum_{\{u,v\}\in E(G)}c_{\{u,v\}}y_{\{u,v\}}$}\\
&\text{subject to $|T|+\sum_{E(G)\ni \{u,v\}\subseteq T}y_{\{u,v\}}\geq 1$ 
for all cliques $\emptyset\neq T\subseteq V(G)$}. 
\end{split}
\end{equation}
Recalling that the variables 
$y_{\{u,v\}}$ can take arbitrary rational values, changing the sign 
of the variables and removing the known offset $1$ from the objective, 
we observe that an offset-$1$-equivalent linear program 
to \eqref{eq:dual-simplified-1} is to 
\begin{equation}
\label{eq:dual-simplified-2}
\begin{split}
&\text{maximize $\sum_{\{u,v\}\in E(G)}c_{\{u,v\}}y_{\{u,v\}}$}\\
&\text{subject to $\sum_{E(G)\ni \{u,v\}\subseteq T}y_{\{u,v\}}\leq |T|-1$ 
for all cliques $\emptyset\neq T\subseteq V(G)$}.
\end{split}
\end{equation}
It follows immediately from this sequence \eqref{eq:max-union-lp}--\eqref{eq:dual-simplified-2} of offset-equivalent linear programs
that we can use our assumed maximum union probability oracle to obtain 
the maximum value of~\eqref{eq:dual-simplified} 
for an arbitrary $c:E(G)\rightarrow [0,\frac{1}{n^2}]$.

Now consider the given rational input $w:E(G)\rightarrow \mathbb{Q}_{\geq 0}$
and recall that we are to solve~\eqref{eq:dual-simplified} for its
optimum value. 
When $w$ is identically zero, the optimum value of~\eqref{eq:dual-simplified} 
is zero, so we can assume that $\max_{e\in E(G)}w(e)>0$ in what follows. 
Set $\alpha=(n^2\max_{e\in E(G)}w_e)^{-1}$ and $c=\alpha w$, observing that
all values of $c$ are rational in the interval $[0,\frac{1}{n^2}]$. 
Furthermore, $c$ has rational encoding length at most 
polynomial in $n$ and the rational encoding length of~$w$. 
Now invoke the maximum union probability oracle on input $(b,\mathcal{F})$
derived from $c$ and offset the oracle output by~$1$ to obtain the optimum 
value $t$ for~\eqref{eq:dual-simplified-2}, and return the value 
$t/\alpha$, observing that the returned value is the optimum value
of~\eqref{eq:dual-simplified} on input $w$. This process runs in time
bounded by a polynomial in $n$ and the rational encoding length of $w$,
making a single query $(b,\mathcal{F})$ to the maximum union probability 
oracle with $\mathcal{F}$ consisting of sets of size at most two. 
\end{proof}
We are now ready for the main reduction from the $k$-clique problem.

\begin{proof}[Proof of Theorem~\ref{thm:minmax}, Item 2 (maximum union probability)]

We reduce from the $k$-clique problem. Let $G$ be a graph and $k$ be 
a nonnegative integer given as input. Without loss of generality we may
assume that $n\geq k\geq 2$ and that $G$ has at least one edge.
We may also assume that the vertex set $V(G)=[n]$ and that the edge
set $E(G)$ consists of $2$-element subsets of $V(G)$. 
It now follows immediately from Lemma~\ref{lem:special-dual} that
we have a subroutine to find the maximum value of any
objective $w:E(G)\rightarrow \mathbb{Q}_{\geq 0}$ over the polyhedron 
\begin{equation}
\label{eq:phi-polyhedron}
\phi_G=\biggl\{y:E(G)\rightarrow\mathbb{Q}\ :\ 
\text{$\sum_{E(G)\ni \{u,v\}\subseteq T}y_{\{u,v\}}\leq |T|-1$ 
for all  cliques $\emptyset \neq T\subseteq V(G)$}\biggr\}\,
\end{equation}
We observe in particular that $\phi_G$ is unbounded, in fact with cone
$\mathbb{Q}^{|E(G)|}_{\leq 0}$ as we will see, so it will be convenient
to proceed via the polar to arrive at the $k$-clique problem.

We first construct a {\em validity oracle} oracle for 
$\phi_G$; 
namely, given $\mu:E(G)\rightarrow\mathbb{Q}$ and $\nu\in\mathbb{Q}$
as input, we are to assert whether $\mu^\top y\leq \nu$ holds 
for all $y\in\phi_G$. On given input $\mu,\nu$, the validity oracle proceeds
as follows. Unless $\mu$ takes only nonnegative values, assert that
$\mu^\top y\leq \nu$ does not hold for all $y\in\phi_G$; indeed,
if $\mu(\{u,v\})<0$ for some $\{u,v\}\in E(G)$, construct 
a vector $y_0:E(G)\rightarrow\mathbb{Q}$ that 
satisfies $y_0(\{u,v\})<\min(0,\nu/\mu(\{u,v\}))$ and is zero elsewhere 
to witness $\mu^\top y_0>\nu$ as well as $y_0\in\phi_G$.
When $\mu$ takes only nonnegative values, use the subroutine with 
$w=\mu$ to obtain the maximum value $t$ of the objective $w^\top y$ 
for $y\in\phi_G$, and assert that $\mu^\top y\leq \nu$ holds 
for all $y\in\phi_G$ if $t\leq \nu$; otherwise assert that 
$\mu^\top y\leq \nu$ does not hold for all $y\in\phi_G$.
This completes the description of the validity oracle for $\phi_G$.

The validity oracle for $\phi_G$ immediately gives a {\em membership oracle} 
for the {\em polar}
\begin{equation}
\label{eq:polar}
\phi_G^*=\biggl\{z:E(G)\rightarrow\mathbb{Q}\ :\ \text{$z^\top y\leq 1$ for all $y\in\phi_G$}\biggr\}
\end{equation}
of $\phi_G$; indeed, by definition of the polar, 
for $\mu\in E(G)\rightarrow\mathbb{Q}$ given as input we have 
$\mu\in\phi_G^*$ if and only if $\mu^\top y\leq 1$ holds for all $y\in\phi_G$.

We next turn this membership oracle for $\phi_G^*$ 
into a validity oracle for $\phi_G^*$. Towards this end, we collect some
geometric prerequisites about $\phi_G$ and $\phi_G^*$. 
Recall the standard norm inequalities 
$\|y\|_2\leq \|y\|_1\leq \sqrt{|E(G)|}\|y\|_2$ for 
$y:E(G)\rightarrow\mathbb{Q}$. Since $\|y\|_2\leq\frac{1}{n}$ implies
$\|y\|_1\leq n\|y\|_2\leq 1$, from \eqref{eq:phi-polyhedron} 
we observe that $\phi_G$ contains the 2-norm ball of 
radius $\frac{1}{n}$ centered at the origin; thus, the polar $\phi_G^*$ is 
a polytope contained in the 2-norm ball of radius $n$ centered at the origin. 
We also observe that $\phi_G^*\subseteq\mathbb{Q}^{|E(G)|}_{\geq 0}$; 
indeed, any $z:E(G)\rightarrow\mathbb{Q}$ with $z(\{u,v\})<0$ for
some $\{u,v\}\in E(G)$ has $z^\top y_0>1$ for $y_0:E(G)\rightarrow\mathbb{Q}$
with $y_0(\{u,v\})<1/z(\{u,v\})$ and zero elsewhere, 
with $y_0\in\phi_G$. 
Consider an arbitrary $z:E(G)\rightarrow\mathbb{Q}_{\geq 0}$ 
with $\|z\|_2\leq\frac{1}{n}$. From \eqref{eq:phi-polyhedron}
it follows that for all $y\in\phi_G$ we have
$z^\top y\leq \|z\|_1\leq n\|z\|_2\leq 1$ and thus
$z\in\phi_G^*$ by \eqref{eq:polar}. It follows that
$\phi_G^*$ contains the nonnegative part of the 2-norm ball of
radius $\frac{1}{n}$ centered at the origin and that $z_{\{u,v\}}\geq 0$
is a face of $\phi_G^*$ for each $\{u,v\}\in E(G)$; so we
have in particular that $\phi_G^*$ contains the
2-norm ball of radius $\frac{1}{2n^2}$ centered at 
$(\frac{1}{2n^2},\frac{1}{2n^2},\cdots,\frac{1}{2n^2})$.
Furthermore, it is immediate that the cone of $\phi_G$ contains 
$\mathbb{Q}^{|E(G)|}_{\leq 0}$, and we claim that equality holds;
indeed, suppose that $y:E(G)\rightarrow\mathbb{Q}$ belongs to the 
cone of $\phi_G$ with $y(\{u,v\})>0$ for some $\{u,v\}\in E(G)$,
and obtain a contradiction to $z\in\phi_G^*$ for $z(\{u,v\})=\frac{1}{n}$
and zero elsewhere. 

From the halfspace-representation~\eqref{eq:phi-polyhedron} we 
observe that the facet-complexity~\cite[Definition~6.2.2]{GrotschelLS1993} 
of $\phi_G\subseteq\mathbb{Q}^{|E(G)|}$ is bounded by $O(n^2)$, 
implying~\cite[Lemma~6.2.3]{GrotschelLS1993} that the 
vertex-complexity~\cite[Definition~6.2.2]{GrotschelLS1993} of
$\phi_G$ is bounded by $O(n^6)$.
Since the polar $\phi_G^*$ admits halfspace-representation with each 
inequality obtained from the vertex-representation%
\footnote{We recall that any nonempty 
polyhedron $\pi\subseteq\mathbb{Q}^d$ admits 
representation as the Minkowski sum $\pi=\conv(P)+\cone(R)$ of a convex hull 
of a finite set of points $P\subseteq\mathbb{Q}^d$ and the convex cone of
a finite set of rays $R\subseteq\mathbb{Q}^d$ so that the polar
$\pi^*=\{z\in\mathbb{Q}^d:y^\top z\leq 1\text{ for all $y\in\pi$}\}\subseteq\mathbb{Q}^d$ of $\pi$ has the halfspace-representation 
$\pi^*=\{z\in\mathbb{Q}^d:
p^\top z\leq 1\text{ for all $p\in P$ and } 
r^\top z\leq 0\text{ for all $r\in R$ }\}$, 
cf.~\cite[Sect.~0.1]{GrotschelLS1993}.}{}
of $\phi_G$, it immediately follows that 
the facet-complexity of $\phi_G^*$ is bounded by $O(n^6)$.

Since we know an explicit positive-radius ball inside $\phi_G^*$
as well as an explicit positive-radius ball containing $\phi_G^*$,
and the facet complexity of $\phi_G^*$ is bounded by a polynomial in $n$,
the membership oracle for $\phi_G^*$ can be turned into a weak violation
and thus weak validity oracle for 
$\phi_G^*$~\cite[Theorem~4.3.2]{GrotschelLS1993}, which in turn can be
turned into a strong validity oracle for 
$\phi_G^*$~\cite[Theorem~6.3.2]{GrotschelLS1993}, which 
in turn gives a membership oracle for the polar of the polar $\phi_G^{**}$. 
Since $\phi_G$ is closed and convex as well as contains the origin, 
we have $\phi_G^{**}=\phi_G$~\cite[Sect.~0.1]{GrotschelLS1993}) and 
thus a membership oracle for $\phi_G$. 

Finally, we will use the membership oracle for $\phi_G$ to solve 
the $k$-clique problem on $G$. 
To decide whether $G$ has a clique of size $k$, set up the constant function 
$y^{(\kappa)}:E(G)\rightarrow\mathbb{Q}$ with 
$y_{\{u,v\}}^{(\kappa)}=\kappa$ for all $\{u,v\}\in E(G)$ for some 
constant $\kappa>0$ 
yet to be determined. Observe that we can use the membership oracle 
for $\phi_G$ to test whether $y^{(\kappa)}\in\phi_G$. 
For every clique $T\subseteq V(G)$ of $G$, we observe that
$\kappa\binom{|T|}{2}>|T|-1$ if and only if $|T|>\frac{2}{\kappa}$. 
Thus, using the membership oracle for $\phi_G$, 
we can decide whether $G$ has a clique 
of size $k$ by setting $\kappa=\frac{2}{k-1}$; 
indeed, from \eqref{eq:phi-polyhedron} we have that 
the constant-function query $y^{(\kappa)}\notin\phi_G$ if and only if 
$G$ has a clique of size of at least $k$, equivalently a $k$-clique.
\end{proof}

\paragraph*{Acknowledgements.}

Heikki Mannila is supported by the Technology Industries of Finland Centennial Foundation. Chandra Kanta Mohapatra is supported by Helsinki Institute for Information Technology (HIIT) postdoctoral fellowship.

%%%%%%%%%%%%%%%%%%%%%%%%%%%%%%%%%%%%%%%%%%%%%%%%%%%%%%%%%%%%%%%% References %%%

\bibliography{paper}

%%%%%%%%%%%%%%%%%%%%%%%%%%%%%%%%%%%%%%%%%%%%%%%%%%%%%%%%%%%%%%%%%% Appendix %%%

\appendix

\section{Omitted proofs}

\label{sect:proofs}

This appendix restates and proves the claims whose proofs were omitted
in the paper proper.

\vpgeom*

\begin{proof}
The vertices $(\zeta^{(\mathcal{F},T)},e^{(T)})\in\{0,1\}^{|\mathcal{F}|+2^n}$
for $T\subseteq[n]$ of $\vp^{(\mathcal{F})}$
are easily solved from the systems of equations obtained
by leaving one constraint out from the system~\eqref{eq:param-zeta} in all
possible ways, and observing that the resulting system gives a vertex 
satisfying all the constraints in all other cases except when 
the leftmost constraint in~\eqref{eq:param-zeta} is left out.
The $2^n$ vertices are affinely independent so the dimension of 
$\vp^{(\mathcal{F})}$ 
is at least $2^n-1$. The dimension of $\vp^{(\mathcal{F})}$ 
is at most $2^n-1$ because the 
$1+|\mathcal{F}|$ hyperplanes in~\eqref{eq:param-zeta} are
independent. 
\end{proof}

\cpgeom*

\begin{proof}
For Item~1, the zero-one-valued vertices property is
immediate from~\eqref{eq:set-inclusion-indicator} and
Proposition~\ref{prop:venn}(1).
From~\eqref{eq:set-inclusion-indicator} we observe that the vectors
$\zeta^{(\mathcal{F},T)}$ for $T\in\{\emptyset\}\cup\mathcal{F}$ are affinely
independent; indeed, $\zeta^{(\mathcal{F},\emptyset)}$ is the zero vector,
and the vectors $\zeta^{(\mathcal{F},T)}$ for $T\in\mathcal{F}$ 
listed in a linearization of the partial order $(\mathcal{F},\subseteq)$ 
form a triangular matrix with nonzero diagonal. Thus, 
$\cp^{(\mathcal{F})}$ is full-dimensional. For Item~2, 
observe that for all $T,U\subseteq[n]$ we have 
from \eqref{eq:set-inclusion-indicator} that 
$\zeta^{(\mathcal{F},T)}\neq \zeta^{(\mathcal{F},U)}$ if and only
if there exists an $S\in\mathcal{F}$ with 
$\iv{S\subseteq T}\neq\iv{S\subseteq U}$; this inequation holds
if and only if the sets $T$ and $U$ lie in distinct cells of the
set partition $\{\uparrow\!S,2^{[n]}\setminus\!\uparrow\!S\}$. The cells
of the meet 
$\bigwedge_{S\in\mathcal{F}}\{\uparrow\!S,2^{[n]}\setminus\!\uparrow\!S\}$
are thus in bijective correspondence with the vectors in the set
$\{\zeta^{(\mathcal{F},T)}:T\subseteq[n]\}$.
Observe also that $T$ and $U$ lie in distinct cells of the
set partition $\{\uparrow\!\{k\},2^{[n]}\setminus\!\uparrow\!\{k\}\}$
if and only if exactly one of $T$ and $U$ contains $k\in[n]$. Thus,
$\bigwedge_{S\in\mathcal{F}}\{\uparrow\!S,2^{[n]}\setminus\!\uparrow\!S\}$
is the discrete partition with $2^n$ cells when $\{k\}\in\mathcal{F}$ holds
for all $k\in[n]$. Conversely, suppose that $\{k\}\notin\mathcal{F}$ for
some $k\in [n]$, implying that 
$\zeta^{(\mathcal{F},\emptyset)}=\zeta^{(\mathcal{F},\{k\})}$ and thus
$|\{\zeta^{(\mathcal{F},T)}:T\subseteq[n]\}|<2^n$.
For Item~3, recall our discussion of Hailperin's linear program in 
the introduction.
\end{proof}

\upgeom*

\begin{proof}
For Item~1, the zero-one-valued vertices property is
immediate from~\eqref{eq:set-inclusion-indicator}
and Proposition~\ref{prop:venn}(1).
Since $\up^{(\mathcal{F})}$ coordinate-restricts to 
$\cp^{(\mathcal{F})}$, we have that $\up^{(\mathcal{F})}$ has 
dimension at least $|\mathcal{F}|$ by full-dimensionality of 
$\cp^{(\mathcal{F})}$, cf.~Proposition~\ref{prop:correlation}(1). 
It suffices to show that $\up^{(\mathcal{F})}$ has 
dimension $|\mathcal{F}|+1$ when 
$\mathcal{F}\neq 2^{[n]}\setminus\{\emptyset\}$; indeed, 
Item~2 shows that the dimension is $|\mathcal{F}|$ when 
$\mathcal{F}=2^{[n]}\setminus\{\emptyset\}$. So let us assume
$\mathcal{F}\neq 2^{[n]}\setminus\{\emptyset\}$. We split into two cases.
First, suppose that there exists an $k\in [n]$ with $\{k\}\notin\mathcal{F}$.
From~\eqref{eq:set-inclusion-indicator} we observe that the vectors
$(\zeta^{(\mathcal{F},T)},\iv{T=\emptyset})$ for 
$T\in\{\emptyset,\{k\}\}\cup\mathcal{F}$ are affinely
independent; indeed, $(\zeta^{(\mathcal{F},\{k\})},0)$ is the zero vector,
and the vectors $(\zeta^{(\mathcal{F},T)},\iv{T=\emptyset})$ 
for $T\in\{\emptyset\}\cup\mathcal{F}$ listed in a reverse linearization 
of the partial order $(\{\emptyset\}\cup\mathcal{F},\subseteq)$, placing
the empty set last, form a triangular matrix with nonzero diagonal. Thus, 
$\up^{(\mathcal{F})}$ is full-dimensional.
Second, suppose that $\{k\}\in\mathcal{F}$ holds for all $k\in[n]$.
Since $\mathcal{F}\neq 2^{[n]}\setminus\{\emptyset\}$, there thus exists 
a set $\mathcal{F}\not\ni U\subseteq [n]$ with $|U|\geq 2$ such that all 
nonempty proper subsets of $U$ are in $\mathcal{F}$.
We show that the vectors $(\zeta^{(\mathcal{F},T)},\iv{T=\emptyset})$ for 
$T\in\{\emptyset,U\}\cup\mathcal{F}$ are affinely independent.
Linearize the partial order 
$(\{\emptyset,U\}\cup\mathcal{F},\subseteq)$ so that the sets
in $2^U\subseteq \{\emptyset,U\}\cup\mathcal{F}$ come before all 
the other sets. Place the vectors 
$(\zeta^{(\mathcal{F},T)},\iv{T=\emptyset})$ into the columns of a 
matrix from left to right in this linearization order 
for $T\in\{\emptyset,U\}\cup\mathcal{F}$ so that $T=\emptyset$ gives 
the leftmost column. We can index the rows of this matrix, again in 
linearization order, by $S\in\{\emptyset\}\cup\mathcal{F}$, so that 
row $S=\emptyset$ has
the entry $\iv{T=\emptyset}$ in column $T$, and row $S\in\mathcal{F}$
has the entry $\iv{S\subseteq T}$ in column $T$ 
by~\eqref{eq:set-inclusion-indicator}. The matrix has shape 
$(|\mathcal{F}|+1)\times (|\mathcal{F}|+2)$. Now subtract the leftmost
column ($T=\emptyset$) from all the other columns, 
delete the leftmost column, and negate
the first row. It now suffices to show that 
the resulting $(|\mathcal{F}|+1)\times (|\mathcal{F}|+1)$
matrix has full rank to conclude affine independence. 
Recalling the linearization order, we observe that the matrix is upper
triangular with nonzero diagonal, apart from the top-left 
$(2^{|U|}-1)\times(2^{|U|}-1)$ submatrix, with rows indexed by
$S\in 2^U\setminus\{U\}$, and columns indexed by
$T\in 2^U\setminus\{\emptyset\}$, with entries $\iv{S\subseteq T}$. 
This top-left submatrix has full rank; indeed, the submatrix
equals, assuming lexicographic order is used to linearize $2^U$, the
$\bar Z$-matrix obtained from the base case $d=1$, $Z=M=(1)$, and $f=g=(1)$ 
by applying the construction in the following immediate
claim $|U|-1$ times, yielding $\bar Z$ of shape 
$(2^{|U|}-1)\times (2^{|U|}-1)$. 

\begin{claim}
Let $Z$ and $M$ be mutually inverse $d\times d$ matrices and 
let $f$ and $g$ be $d\times 1$ vectors such that $\gamma=g^\top M f\neq 0$. 
Define the $(2d+1)\times(2d+1)$ matrices and $(2d+1)\times 1$ vectors
\[
\bar Z=
\left[\begin{array}{c|c|c}
Z&f&Z\\\hline
g^\top&0&g^\top\\\hline
0&f&Z\\
\end{array}\right],\ 
\bar M=
\left[\begin{array}{c|c|c}
M&0&-M\\\hline
\gamma^{-1} g^\top M&-\gamma^{-1}&0\\\hline
-\gamma^{-1} Mfg^\top M&\gamma^{-1} Mf&M\\
\end{array}\right],\ 
\bar f=
\left[\begin{array}{@{}c@{}}
f\\\hline
0\\\hline
0
\end{array}\right],\ 
\bar g=
\left[\begin{array}{@{}c@{}}
0\\\hline
0\\\hline
g
\end{array}\right]\,.
\]
Then, $\bar Z$ and $\bar M$ 
are mutually inverse and $\bar\gamma=\bar g^\top \bar M \bar f=-\gamma\neq 0$. 
\end{claim}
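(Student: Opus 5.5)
The plan is to verify the identity $\bar Z\bar M=I_{2d+1}$ by blockwise multiplication. Since both matrices are square, this alone shows they are mutually inverse. Afterwards we compute $\bar\gamma$ directly. Throughout we use only $ZM=MZ=I_d$ and the scalar $\gamma=g^\top Mf\neq 0$. The row and column partitions are $d,1,d$, so the product has nine blocks, and each one is a short computation.

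We take the block rows of $\bar Z$ in turn.

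First block row, $(Z\mid f\mid Z)$:
\begin{itemize}
\item Against the first block column of $\bar M$ we get $ZM+\gamma^{-1}fg^\top M-\gamma^{-1}ZMfg^\top M=I+\gamma^{-1}fg^\top M-\gamma^{-1}fg^\top M=I$.
\item Against the middle column we get $0-\gamma^{-1}f+\gamma^{-1}ZMf=0$.
\item Against the last column we get $-ZM+0+ZM=0$.
\end{itemize}

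Middle row, $(g^\top\mid 0\mid g^\top)$:
\begin{itemize}
\item Against the first column we get $g^\top M-\gamma^{-1}(g^\top Mf)g^\top M=g^\top M-g^\top M=0$.
\item Against the middle column we get $\gamma^{-1}g^\top Mf=1$.
\item Against the last column we get $-g^\top M+g^\top M=0$.
\end{itemize}

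Last row, $(0\mid f\mid Z)$:
\begin{itemize}
\item Against the first column we get $\gamma^{-1}fg^\top M-\gamma^{-1}fg^\top M=0$.
\item Against the middle column we get $-\gamma^{-1}f+\gamma^{-1}f=0$.
\item Against the last column we get $ZM=I$.
\end{itemize}

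Hence $\bar Z\bar M=I$, and therefore $\bar M=\bar Z^{-1}$.

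For $\bar\gamma$, note that $\bar g$ is supported only on the last block and $\bar f$ only on the first. So $\bar g^\top\bar M\bar f=g^\top\bigl(-\gamma^{-1}Mfg^\top M\bigr)f=-\gamma^{-1}(g^\top Mf)^2=-\gamma\neq 0$.

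There is no real obstacle here; the only step requiring care is the rank-one cancellation in the $(1,1)$ and $(2,1)$ blocks, which uses $ZMf=f$ and the scalar identity $g^\top Mf=\gamma$. If one preferred to avoid relying on squareness, one could also check $\bar M\bar Z=I$ by the same nine-block bookkeeping, but this is unnecessary. Finally, the recursion applies in the proof of Proposition~\ref{prop:union} because the base case $Z=M=(1)$, $f=g=(1)$ gives $\gamma=1$, and the sign of $\gamma$ then alternates with magnitude $1$, so the hypothesis $\gamma\neq0$ is preserved at every iteration.
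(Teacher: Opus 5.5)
Your proposal is correct and matches the paper's approach: the paper calls the claim immediate and gives no further argument, so the intended proof is this direct blockwise check of $\bar Z\bar M=I$. All nine blocks and your computation $\bar\gamma=-\gamma^{-1}(g^\top Mf)^2=-\gamma$ are right. Your closing observation, that $|\gamma|=1$ persists and so the construction can be iterated, correctly justifies the paper's use of the claim.
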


For Item~2, from the proof of Item~1 above we observe that 
$\up^{(\mathcal{F})}$ is full-dimensional unless 
$\mathcal{F}=2^{[n]}\setminus\{\emptyset\}$. 
When $\mathcal{F}=2^{[n]}\setminus\{\emptyset\}$, that is, when we know
all intersection probabilities, codimension $1$ with
the stated hyperplane follows from 
the inclusion-exclusion formula~\eqref{eq:inclusion-exclusion}.
For Item~3, recall the proof of Proposition~\ref{prop:correlation}(2);
observe that the cell of the set partition
$P=\bigwedge_{S\in\mathcal{F}}\,\{\uparrow\!S,2^{[n]}\setminus\!\uparrow\!S\}$
that contains the empty set equals $\{\emptyset\}$ if and only if 
$\{k\}\in\mathcal{F}$ for all $k\in [n]$; otherwise, the meet
$P\wedge\{\{\emptyset\},2^{[n]}\setminus\{\emptyset\}\}$ splits this cell
into exactly two cells. We have that 
$|P\wedge\{\{\emptyset\},2^{[n]}\setminus\{\emptyset\}\}|$ is the
number of vertices in $\up^{(\mathcal{F})}$. 
For Item~4, recall our discussion of Hailperin's linear program in 
the introduction.
\end{proof}

\unionmember*

\begin{proof}
We reduce from the fractional chromatic number problem.
Let $G$ be a graph with $n$ vertices given as input and 
assume the setting in the proof of Theorem~\ref{thm:minmax}(1) 
as regards the instance $(b,\mathcal{F})$ of minimum union probability 
constructed in that proof. Let us write $[u_\text{min},u_\text{max}]$ 
for the union probability interval of the instance $(b,\mathcal{F})$, and 
recall that $(b,x_\emptyset)\in\up^{(\mathcal{F})}$ if and only if 
$u=1-x_\emptyset\in[u_\text{min},u_\text{max}]$. 
By construction, the fractional
chromatic number $\chi_f(G)$ satisfisfies $u_\text{min}=\chi_f(G)/n$,
and from \eqref{eq:min-union-feasible} we have $u_\text{max}=1$, where 
both $u_\text{min}$ and $u_\text{max}$ as well as the vector $b$ have 
rational encoding length bounded by a polynomial in $n$. 

Now suppose we have available an oracle that given as input 
the instance $(b,\mathcal{F})$ and a rational $x_\emptyset$ decides whether 
$(b,x_\emptyset)\in\up^{(\mathcal{F})}$; that is, whether 
$u=1-x_\emptyset\in [u_\text{min},u_\text{max}]=[\chi_f(G)/n,1]$.
Using the oracle a polynomial number of times in $n$, starting from $u=1/2$ 
we can in time polynomial in $n$ use binary search to find a rational 
number $u$ with $|u-u_\text{min}|<\frac{1}{2D^2}$, where $D$ is an upper bound 
for the denominator of $u_\text{min}$; in particular, we can assume that $D$ 
has encoding length bounded by a polynomial in $n$. 
This rational $u$ thus enables us to compute 
the exact value $u_\text{min}=\chi_f(G)/n$ from $u$ in time polynomial 
in $n$ with continued fractions~\cite[Theorem 5.1.9]{GrotschelLS1993}, and 
thus solve the fractional chromatic number of $G$ in time and number of 
oracle calls bounded by a polynomial in $n$.
\end{proof}

\end{document}